\tikzset{
  world/.style={circle,draw=black, minimum size=.4cm,thick},
  point/.style={circle,draw,inner sep=0.5mm,fill=black}
  ->,>={Stealth[width=1mm,length=1mm]},shorten >=1pt,shorten <=1pt,auto,node distance=1cm,semithick,
  sibling distance= 10em}
\newtheorem{theorem}{Theorem}[section]
\newtheorem{definition}[theorem]{Definition}
\newtheorem{proposition}[theorem]{Proposition}
\newtheorem{lemma}[theorem]{Lemma}
\newtheorem{example}[theorem]{Example}
\newtheorem{notation}[theorem]{Notation}
\renewcommand{\phi}{\varphi}
\newcommand{\NN}{\mathbb{N}}
\newcommand{\bisim}{\leftrightarrows}
\newcommand\precdot{\mathrel{\ooalign{$\prec$\cr
  \hidewidth\raise0.0ex\hbox{$\cdot\mkern0.7mu$}\cr}}}
\newcommand{\imsub}{\precdot}
\newcommand{\len}{\textrm{len}}
\title{When Do Introspection Axioms Matter for Multi-Agent Epistemic Reasoning?}
\author{Yifeng Ding
\institute{University of California, Berkeley}
\email{yf.ding@berkeley.edu}
\and
Wesley H. Holliday
\institute{University of California, Berkeley}
\email{wesholliday@berkeley.edu}
\and
Cedegao Zhang
\institute{University of California, Berkeley}
\email{cedzhang@berkeley.edu}
}
\begin{document}
\maketitle

\begin{abstract}
The early literature on epistemic logic in philosophy focused on reasoning about the knowledge or belief of a single agent, especially on controversies about  ``introspection axioms'' such as the $\mathsf{4}$ and $\mathsf{5}$ axioms. By contrast, the later literature on epistemic logic in computer science and game theory has focused on multi-agent epistemic reasoning, with the single-agent $\mathsf{4}$ and $\mathsf{5}$ axioms largely taken for granted. In the relevant multi-agent scenarios, it is often important to reason about \textit{what agent A believes about what agent B believes about what agent A believes}; but it is rarely important to reason just about \textit{what agent A believes about what agent A believes}. This raises the question of the extent to which single-agent introspection axioms actually matter for multi-agent epistemic reasoning. In this paper, we formalize and answer this question. To formalize the question, we first define a set of multi-agent formulas that we call \textit{agent-alternating formulas}, including formulas like $\Box_a \Box_b \Box_a p$ but not formulas like $\Box_a \Box_a p$. We then prove, for the case of belief, that if one starts with multi-agent \textsf{K} or \textsf{KD}, then adding both the $\mathsf{4}$ and $\mathsf{5}$ axioms (or adding the \textsf{B} axiom) does not allow the derivation of any new agent-alternating formulas---in this sense, introspection axioms do not matter. By contrast, we show that such conservativity results fail for knowledge and multi-agent \textsf{KT}, though they hold with respect to a smaller class of \textit{agent-nonrepeating formulas}.
\end{abstract}

\section{Introduction}
The classic early works on epistemic logic in philosophy by Hintikka \cite{Hintikka1965} and Lenzen \cite{Lenzen1978} focused on the logic of knowledge and belief for a single agent,\footnote{Only \S\S~4.1-4.6 and \S~4.13 of \cite{Hintikka1965} and pp.~59, 66, and 70 of \cite{Lenzen1978} contain discussion of multi-agent formulas.} especially on controversies about ``introspection axioms": for example, if an agent knows $p$, does she know that she knows $p$ (formalized by the $\mathsf{4}$ axiom of modal logic, $K_ap\to K_aK_ap$)? If an agent does not know $p$, does she know that she does not know $p$ (formalized by the $\mathsf{5}$ axiom of modal logic, $\neg K_ap\to K_a\neg K_ap$)? By contrast, the later literature on epistemic logic in computer science  (e.g., \cite{Meyer1995,Fagin2003}) and game theory (e.g., \cite{Aumann1999}) focused on \textit{multi-agent} epistemic reasoning, especially as required for coordination between agents or strategic reasoning against opponents. In this literature, the single-agent introspection principles formalized by the $\mathsf{4}$ and $\mathsf{5}$ axioms are largely taken for granted (for exceptions, see, e.g., \cite{Vardi1985,Lamarre1994,Kaneko2002}). In the relevant multi-agent scenarios, it is often important to reason about \textit{what agent A believes about what agent B believes about what agent A believes} ($B_aB_bB_ap$); but it is rarely important to reason just about \textit{what agent A believes about what agent A believes} ($B_aB_ap$). Consider the following famous examples of multi-agent epistemic reasoning.

\paragraph{Muddy children} We assume familiarity with the 3-agent Muddy Children puzzle where two children have mud on their foreheads (see, e.g., \S~1.1 of \cite{Fagin2003}). The following is a derivation in the bimodal version of the minimal normal modal logic \textsf{K} showing how one of the muddy children comes to realize that she is muddy.\footnote{`PL' stands for propositional logic, `Nec' stands for the necessitation rule, and `RM' stands for the monotonicity rule that if $\varphi\to\psi$ is a theorem, then so is $\Box_i\varphi\to\Box_i\psi$. Note that in the derivation, RM is only applied to theorems of the logic. For example, to obtain $(4)$, RM is applied to the theorem $(\varphi\to\psi)\to(\neg\psi\to\neg\varphi)$ where $\varphi:=\Box_2(\neg m_1\wedge\neg m_3)$ and $\psi:=\Box_2m_2$.} Note that (i) no introspection axioms are used, and in fact (ii) modalities occur only ``alternatingly,'' in the sense that no occurrence of a modality for an agent $i$ has scope over another occurrence of a modality for $i$ without an intervening occurrence of some modality for an agent $j\neq i$.
\begin{enumerate}[label=(\alph*)]
    \item $\Box_1 \Box_2 ((\neg m_1\wedge \neg m_3)\to m_2)$ \hfill(assumption: 1 knows that 2 knows that at least one child is muddy)
    \item $\Box_1 \Box_2 \neg m_3$ \hfill(assumption: 1 knows that 2 can see 3, who is not muddy)
    \item $\Box_1 (\lnot m_1 \to \Box_2 \lnot m_1)$ \hfill(assumption: 1 knows that 2 can see 1)
    \item $\Box_1 \lnot \Box_2 m_2$ \hfill(assumption: 1 knows that 2 did not step forward after the parent's first question)
\end{enumerate}
\begin{enumerate}
    \item $\Box_2 ((\neg m_1\wedge \neg m_3)\to m_2) \to (\Box_2 (\neg m_1\wedge \neg m_3)\to \Box_2 m_2)$ \null \hfill (K axiom)
    \item $\Box_1 \Box_2 ((\neg m_1\wedge \neg m_3)\to m_2) \to \Box_1(\Box_2 (\neg m_1\wedge \neg m_3)\to \Box_2 m_2) $ \hfill (from (1) by RM)
    \item $\Box_1(\Box_2 (\neg m_1\wedge \neg m_3)\to \Box_2 m_2)$ \hfill (from (a) and (2) by PL)
    \item $\Box_1(\neg \Box_2 m_2\to \neg\Box_2 (\neg m_1\wedge \neg m_3))$ \hfill (from (3) using PL and RM)
    \item $\Box_1\neg\Box_2 (\neg m_1\wedge \neg m_3)$ \hfill (from (d) and (4) by K and PL)
    \item  $\neg \Box_2 (\neg m_1\wedge \neg m_3)\to\neg (\Box_2 \neg m_1\wedge \Box_2 \neg m_3)$ \hfill (theorem of \textsf{K})
    \item $\Box_1\neg \Box_2 (\neg m_1\wedge \neg m_3)\to \Box_1\neg (\Box_2 \neg m_1\wedge \Box_2 \neg m_3)$ \hfill (from (6) by RM)
    \item $\Box_1\neg (\Box_2 \neg m_1\wedge \Box_2 \neg m_3)$ \hfill (from (5) and (7) by PL)
    \item $\Box_1 \neg \Box_2 \neg m_1$ \hfill (from (b) and (8) using PL, Nec, and K)
    \item $\Box_1 m_1$ \hfill (from (c) and (9) using PL, Nec, and K)
\end{enumerate}

\tikzset{
    % Two node styles for game trees: solid and hollow
    solid node/.style={circle,draw,inner sep=1.5,fill=black},
    hollow node/.style={circle,draw,inner sep=1.5}
}
\paragraph{Backward induction} We assume familiarity with the classic backward induction reasoning in extensive form games (see, e.g., \cite[\S~6.2]{Osborne1994}). In \cite{Vilks1999}, Vilks provides a syntactical derivation of backwards induction in the bimodal version of the modal logic \textsf{KT}, which we reproduce below. Again note that (i) no introspection axioms are used, and in fact (ii) modalities occur only ``alternatingly'' as above.
\begin{center}
\begin{tikzpicture}[scale=1.5,font=\footnotesize]
  % Specify spacing for each level of the tree
    \tikzstyle{level 1}=[level distance=10mm,sibling distance=20mm]
    \tikzstyle{level 2}=[level distance=10mm,sibling distance=10mm]
  \node(0)[solid node,label=above:{$1$},label=left:{$a$}]{}
    child{node(1)[solid node,label=above:{$2$},label=left:{$b$}]{}
      child{node[hollow node,label=below:{$(4,2)$},label=left:{$d$}]{} edge from parent node[left]{}}
      child{node[hollow node,label=below:{$(2,4)$},label=left:{$e$}]{} edge from parent node[right]{}}
      edge from parent node[left,xshift=-3]{}
    }
    child{node(2)[solid node,label=above:{$2$},label=right:{$c$}]{}
      child{node[hollow node,label=below:{$(1,1)$},label=right:{$f$}]{} edge from parent node[left]{}}
      child{node[hollow node,label=below:{$(3,3)$},label=right:{$g$}]{} edge from parent node[right]{}}
      edge from parent node[right,xshift=3]{}
    };
  % specify mover at 2nd information set
\end{tikzpicture}   
\end{center}
\begin{itemize}
    \item $p_1 \coloneqq ab \wedge \neg ac \wedge bd \wedge \neg be \wedge \neg cf \wedge \neg cg $ \hfill(both play left)
    \item $p_2 \coloneqq ab \wedge \neg ac \wedge \neg bd \wedge be \wedge \neg cf \wedge \neg cg $ \hfill(1 play left, 2 play right)
    \item $p_3 \coloneqq \neg ab \wedge ac \wedge \neg bd \wedge \neg be \wedge cf \wedge \neg cg $ \hfill(1 play right, 2 play left)
    \item $p_4 \coloneqq \neg ab \wedge ac \wedge \neg bd \wedge \neg be \wedge \neg cf \wedge cg $ \hfill(both play right)
    \item $q \coloneqq d >_1 e \wedge d >_1 f \wedge d >_1 g \wedge e >_1 f \wedge g >_1 e \wedge g >_1 f \wedge e >_2 d \wedge d >_2 f \wedge g >_2 d \wedge e >_2 f \wedge e >_2 g \wedge g >_2 f$ \hfill(players' preferences)
    \item $G \coloneqq (p_1 \vee p_2 \vee p_3 \vee p_4) \wedge q$ \hfill(description of the game)
\end{itemize}
\begin{enumerate}[label=(\alph*)]
    \item $\Box_1 G$ \hfill(assumption: 1 knows the game)
    \item $\Box_1((bd \vee be) \rightarrow \Diamond_2 be)$ (assumption: 1 knows that if 2 is at $b$ then 2 considers the move $be$ possible)
    \item $\Box_1((cf \vee cg) \rightarrow \Diamond_2 cg)$ \hfill(assumption: similar to (b))
    \item $\Box_2 ((ab \vee ac) \rightarrow \Diamond_1 ac)$ \hfill(assumption: similar to (b))
    \item $\Box_1((e >_2 d \wedge \Diamond_2 be) \rightarrow \neg bd)$ \hfill(assumption: follows from assuming 1 knows that 2 is rational)
    \item $\Box_1((g >_2 f \wedge \Diamond_2 cg) \rightarrow \neg cf)$ \hfill(assumption: similar to (e))
    \item $\big( \Box_1 (ab \leftrightarrow be) \wedge \Box_1 (ac \leftrightarrow cg) \wedge g >_1 e \wedge \Diamond_1 ac \big) \rightarrow \neg ab$ (assumption: follows from 1 being rational)
\end{enumerate}
\begin{enumerate}
    \item $\Box_1 (ab \leftrightarrow be)$ \hfill(from (a), (b), and (e) using PL, Nec, and K)    
    \item $\Box_1 (ac \leftrightarrow cg)$ \hfill(from (a), (c), and (f) using PL, Nec, and K)
    \item $(ab \vee ac) \rightarrow \Diamond_1 ac$ \hfill(from (d) by T)
    \item $G$ \hfill(from (a) by T)
    \item $ab \vee ac$ \hfill(from (4) by PL)
    \item $\Diamond_1 ac$ \hfill(from (3) and (5) by PL)
    \item $\neg ab$ \hfill(from (g), (1), (2), (4), and (6) by PL)
    \item $ac$ \hfill(from (5) and (7) by PL)
    \item $ac \leftrightarrow cg$ \hfill(from (2) by T)
    \item $ac \wedge cg$ \hfill(from (8) and (9) by PL)
\end{enumerate}

In general, in typical strategic form games a player needs to reason about the beliefs of her opponents, as which action is best for her depends on her opponents' actions, which in turn depend on their beliefs. On the other hand, reasoning about one's own beliefs seems unnecessary, as the dependencies just mentioned seem to be tight: which action is the best for a player depends on what her opponents' actions are alone, which in turn depend on their beliefs over what their opponents' actions are alone. We can then iterate this reasoning, and it seems there is no place for reasoning about one's own beliefs. In Appendix~\ref{sec:alternating-rationality} we provide a formalization of this idea using Kripke models of games in the style of \cite{stalnaker1994evaluation} and \cite{bonanno2002modal}, where only formulas with no modality scoping immediately over a modality of the same agent are used to ensure that rationalizable strategies are played.\footnote{We are not arguing that introspection assumptions never matter in multi-agent epistemic reasoning. For example, it is shown in \cite{Geanakoplos1989,Lederman2015} that Aumann's \cite{Aumann1976} theorem on agreeing to disagree fails without the assumption of positive introspection.}

These considerations raise the question of the extent to which single-agent introspection axioms actually matter for multi-agent epistemic reasoning. In particular, as motivated by the above examples, we can ask: in situations where the agents and also the analyst only need to reason about formulas where modalities occur only alternatingly, would the commonly debated introspection axioms still matter, in the sense that assuming them allows us to derive more conclusions?

This question has indeed been partially investigated previously, though motivated not by the question of whether introspection axioms may in practice be ``irrelevant'' but rather by the goal of devising efficient reasoning algorithms for the system $\mathsf{K45}$. In \cite{lakemeyer2012efficient}, it is explicitly stated (Lemma 5) that when restricted to the fragment of the multi-agent language in which modalities occur only in the agent-alternating way, $\mathsf{K}$ and $\mathsf{K45}$ derive the same set of theorems.\footnote{The authors refer to \cite{halpern2001multi} for the proof of this lemma, though we are unable to locate an explicit proof there.} This facilitates reasoning in $\mathsf{K45}$ since it is also known that every formula is provably equivalent in $\mathsf{K45}$ to an agent-alternating formula,\footnote{In Appendix \ref{sec:expressivity}, we show the semantic counterpart of this proposition and further show that $\mathsf{4}$ and $\mathsf{5}$ are in a sense necessary. See also Theorem 1 of \cite{parikh1992levels} for an early precursor of this result.} which is then derivable in $\mathsf{K45}$ iff it is derivable $\mathsf{K}$, making the efficient methods of deciding theoremhood in $\mathsf{K}$ applicable to $\mathsf{K45}$. Subsequently, the idea of agent-alternating formulas was also used in the axiomatization of refinement quantification logics \cite{hales2012refinement,hales2016quantifying} and in  epistemic planning \cite{huang2018general,liu2018multi,Fang2019compile}. 

In this paper, we study the question more systematically. In \S~\ref{sec:agent-alternating-formulas}, we provide multiple ways to define the \textit{agent-alternating formulas}, which include formulas like $\Box_a(\Box_b p \land \Box_b  \Box_a q)$ but not $\Box_a(\Box_b p \land \Box_a q)$. In \S~\ref{sec:alt-collapse}, we first provide a bisimulation notion for the fragment of agent-alternating formulas and then use it to completely chart the relationships of the  modal logics in the well-known ``Modal Logic Cube'' when restricted to the fragment of agent-alternating formulas. We prove that if one starts with multi-agent \textsf{K} or \textsf{KD}, then adding both the $\mathsf{4}$ and $\mathsf{5}$ axioms (or adding the $\mathsf{B}$ axiom) does not allow the derivation of any new agent-alternating formula---in this sense, introspection axioms do not matter. By contrast, we show that such conservativity results fail for knowledge and multi-agent \textsf{KT}, though they hold with respect to a smaller class of \textit{agent-nonrepeating formulas} introduced in \S~\ref{sec:nonrepeating}. In \S~\ref{sec:commonbelief}, we report on preliminary investigations of how these results are affected in the presence of a \textit{common belief} operator in the language. Finally, we conclude in \S~\ref{sec:discussion} with some directions for future research.

\section{Agent-Alternating Formulas}
\label{sec:agent-alternating-formulas}
Fix a set $A$ of agents with $|A| \ge 2$ and a countably infinite set $\mathsf{Prop}$ of proposition letters.

\begin{definition} The language of multi-agent epistemic logic is defined inductively by
    $$
    \mathcal{L} \ni \varphi ::= 
    p \mid 
    \lnot \varphi \mid 
    (\varphi \land \varphi) \mid 
    \Box_a\varphi
    $$
    where $p\in \mathsf{Prop}$ and $a\in A$. Connectives $\to$, $\lor$, and $\Diamond_a$ are abbreviations as usual. 
\end{definition}
  
We adopt the standard definition of when one formula is a \textit{subformula} of another.

\begin{notation} \normalfont{For $\varphi,\psi\in\mathcal{L}$, let $\varphi\preccurlyeq\psi$ indicate that $\varphi$ is a subformula of $\psi$ and $\varphi \prec \psi$ that $\varphi$ is a proper subformula of $\psi$.}
\end{notation}

Intuitively, agent-alternating formulas are those formulas in which an operator $\Box_a$ does not immediately scope over another operator $\Box_a$ of the same agent $a$. We now offer two ways to precisely capture this intuition, one using immediate subformulas and occurrences, and one using simultaneous induction. 
  
\begin{definition}\normalfont{
    For $\alpha, \beta \in \mathcal{L}$, we say $\alpha$ is an
    \textit{immediate subformula} of $\beta$, and write $\alpha \imsub \beta$, if
    $\beta$ is either $\lnot \alpha$, or $(\alpha \land \gamma)$ for some
    $\gamma \in \mathcal{L}$, or $(\gamma \land \alpha)$ for some $\gamma \in
    \mathcal{L}$, or $\Box_a\alpha$ for some $a \in A$. Note that the
    reflexive and transitive closure of $\imsub$ is precisely $\preccurlyeq$.
    
    For any $\varphi \in \mathcal{L}$, an \textit{occurrence type} $O$ of
    $\varphi$ is a finite sequence $\langle O_1, O_2, \cdots, O_{\textrm{len}(O)}
    \rangle$ of formulas in $\mathcal{L}$ such that $O_{\textrm{len}(O)} = \varphi$ and
    for each $i$ between $1$ and $\textrm{len}(O) - 1$, $O_i \imsub O_{i+1}$. Let
    $OC(\varphi)$ be the set of occurrence types of $\varphi$ and $\le$  the
    prefix-extension relation: $O \le O'$ iff $O'$ is a suffix of $O$. It is
    then easy to see that $\langle OC(\varphi), \le \rangle$ is a
    (downward-growing) tree.
    
    We call an occurrence type $O$ of $\varphi$ with $O_1 = \alpha$ an
    \textit{$\alpha$-occurrence} of $\varphi$. If this $\alpha$ is  $\Box_a\beta$ for some
    $\beta \in \mathcal{L}$ and $a \in A$, then we also call  $O$ a
    \textit{$\Box_a$-occurrence}. We typically denote an $\alpha$-occurrence by $O[\alpha]$.}
\end{definition}
  
\begin{definition} \normalfont{
A formula $\varphi\in\mathcal{L}$ is an \textit{agent-alternating formula} iff for any $a\in A$ and  any two different $\Box_a$ occurrences $O[\Box_a \alpha]$ and $O[\Box_a \beta]$ such that
$O[\Box_a \alpha] \le O[\Box_a \beta]$,
there is a $b \in A \setminus \{a\}$ and a $\Box_b$-occurrence $O[\Box_b\gamma]$ of $\varphi$ such that 
$O[\Box_a \alpha] \le O[\Box_b \gamma] \le O[\Box_a \beta]$.
In other words, $\varphi$ is agent alternating iff in the tree $\langle OC(\varphi), \le \rangle$, between any two $\Box_a$-occurrences, there is a $\Box_b$-occurrence for some $b \in A \setminus \{a\}$.}
\end{definition}

\begin{example}\normalfont{Assuming $a, b, c$ are different elements in $A$, examples of agent-alternating formulas include:
\[
\Box_a p, ~\Box_a \Box_b p, ~\Box_a \Box_b \Box_a p, ~\Box_a \Box_b \Box_c p, ~\Box_a (p \wedge \Box_b q).
\]
Non-examples include:
\[
\Box_a \Box_a p, ~\Box_a \Box_b \Box_a \Box_a p, ~\Box_a (\Box_b \Box_a p \wedge \Box_a q).
\]}
\end{example}

We now give an equivalent inductive definition of the set of agent-alternating formulas.
\begin{definition}
  \label{def:inductive-def}\normalfont{Define a family $\{ \mathcal{L}_{-a}\}_{a \in A}$ of languages through the following simultaneous induction: 
  $$
  \mathcal{L}_{-a} \ni \varphi ::= 
    p \mid 
    \Box_x\psi  \mid 
    \lnot \varphi \mid 
    (\varphi \land \varphi)
  $$
  where $p \in \mathsf{Prop}$ and $x \in A \setminus \{a\}$ while $\psi \in \mathcal{L}_{-x}$.
  Then the language $\mathcal{L}_{alt}$ is defined inductively by
  $$
  \mathcal{L}_{alt} \ni \varphi ::= 
  p \mid 
  \chi \mid 
  \lnot \varphi \mid 
  (\varphi \land \varphi)
  $$
  where $p \in \mathsf{Prop}$ and $\chi \in \bigcup_{a \in A} \mathcal{L}_{-a}$.}
\end{definition}
Note that $\bigcup_{a\in A}\mathcal{L}_{-a}$ does not cover all of $\mathcal{L}_{alt}$. For example, when $A = \{a, b\}$ with $a \not= b$, $\Box_a p \land \Box_b p$ is in $\mathcal{L}_{alt}$ but not in $\bigcup_{x \in A} \mathcal{L}_{-x}$. 

It is not hard to verify that the two definitions above are equivalent, suggesting that our formal definitions captures the intended intuition. Due to limited space, we omit the  proof of this equivalence, but the idea is simply to examine the parsing trees of formulas. 
\begin{proposition} 
For any $\varphi \in \mathcal{L}$, $\varphi$ is agent alternating iff $\varphi \in \mathcal{L}_{alt}$.
\end{proposition}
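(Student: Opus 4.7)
The plan is to prove both directions simultaneously by structural induction on $\varphi$. The only non-trivial case is $\varphi = \Box_x \psi$, so I first establish an auxiliary lemma characterizing membership in $\mathcal{L}_{-a}$ via occurrence types. Call a $\Box_a$-occurrence $O = \langle O_1,\ldots,O_{\len(O)}\rangle$ of $\psi$ \emph{exposed} if $O_i$ is not of the form $\Box_b \delta$ for any $i$ with $2 \le i \le \len(O)$; intuitively, $O_1 = \Box_a \gamma$ is reached from the root $\psi$ by passing only through $\lnot$ and $\wedge$ connectives. The lemma states: $\psi \in \mathcal{L}_{-a}$ iff $\psi$ is agent-alternating and has no exposed $\Box_a$-occurrence.

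I would prove the lemma by induction on $\psi$; the atomic and Boolean cases are routine. For $\psi = \Box_y \chi$, the definition of $\mathcal{L}_{-a}$ gives $\psi \in \mathcal{L}_{-a}$ iff $y \neq a$ and $\chi \in \mathcal{L}_{-y}$, and the inductive hypothesis on $\chi$ rewrites the latter as ``$\chi$ is agent-alternating and has no exposed $\Box_y$-occurrence.'' Matching this to the target ``$\psi$ agent-alternating and no exposed $\Box_a$-occurrence'' breaks into two parts: (i) $y \neq a$ iff $\psi$ has no exposed $\Box_a$-occurrence, since the only candidate is the length-1 occurrence $\langle \Box_y \chi \rangle$, and this is a $\Box_a$-occurrence exactly when $y = a$; and (ii) given $y \neq a$, agent-alternation of $\psi$ is equivalent to agent-alternation of $\chi$ together with non-existence of exposed $\Box_y$-occurrences of $\chi$. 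For (ii), any pair of distinct $\Box_b$-occurrences of $\psi$ either both have length $\ge 2$ (so they restrict to a $\le$-related pair of $\Box_b$-occurrences of $\chi$, handled directly by agent-alternation of $\chi$) or involves the outer $\langle \Box_y\chi\rangle$ paired with an inner $\Box_y$-occurrence $O\cdot \psi$ with $O$ a $\Box_y$-occurrence of $\chi$; non-exposure of $O$ yields a modal $O_i$, and agent-alternation of $\chi$ applied to the two nested $\Box_y$-occurrences $O$ and $\langle O_i,\ldots,O_{\len(O)}\rangle$ of $\chi$ supplies a separating $\Box_c$-occurrence with $c \neq y$, which when extended by $\psi$ is the required intermediate.

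Given the lemma, the main proposition follows easily. The atomic and Boolean cases are immediate because $\Box_a$-occurrences of $\lnot\varphi'$ and $\varphi_1 \wedge \varphi_2$ transparently correspond to those of the immediate subformulas. For $\varphi = \Box_x \psi$, apply the lemma to $\varphi$ itself with any $a \neq x$ (which exists since $|A| \ge 2$): the no-exposed-$\Box_a$-occurrence clause is automatic for $a \neq x$ because $\varphi$ is $\Box_x$-form (its only length-1 occurrence is a $\Box_x$-occurrence, and any longer $\Box_a$-occurrence ends in $\varphi$, which is modal), so the lemma collapses to ``$\varphi \in \mathcal{L}_{-a}$ iff $\varphi$ is agent-alternating.'' By the definition of $\mathcal{L}_{-a}$ this is also iff $\psi \in \mathcal{L}_{-x}$, which in turn is iff $\varphi \in \bigcup_{a} \mathcal{L}_{-a} \subseteq \mathcal{L}_{alt}$, closing the induction.

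\textbf{Main obstacle.} The most delicate step is the iteration inside the lemma: the modal $O_i$ witnessing non-exposure of $O$ may itself be $\Box_y$-form rather than $\Box_c$-form with $c \neq y$, and one must reapply agent-alternation of $\chi$ to the two nested $\Box_y$-occurrences $O$ and $\langle O_i,\ldots,O_{\len(O)}\rangle$ in order to extract a separating witness of different agent. A secondary bookkeeping hazard is the convention $O \le O'$ iff $O'$ is a suffix of $O$, so that shorter occurrences sit \emph{higher} in the parse tree; this orientation must be tracked consistently when aligning pairs of occurrences across the $\Box_x$-step and when identifying which endpoint plays the role of the inner versus outer occurrence.
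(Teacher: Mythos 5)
Your proof is correct: the auxiliary lemma characterizing $\mathcal{L}_{-a}$ as ``agent-alternating with no exposed $\Box_a$-occurrence,'' together with the step that upgrades a same-agent modal witness $O_i = \Box_y\delta$ to a different-agent separator by reapplying agent-alternation to the nested pair $O$ and $\langle O_i,\ldots,O_{\len(O)}\rangle$, soundly handles the only delicate point. The paper explicitly omits this proof, saying only that ``the idea is simply to examine the parsing trees of formulas,'' and your argument is a faithful, complete elaboration of exactly that approach.
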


\section{Collapsing logics by \texorpdfstring{$\mathcal{L}_{alt}$}{Lalt}}
\label{sec:alt-collapse}
We now investigate which logics are indistinguishable by formulas in $\mathcal{L}_{alt}$. For any normal modal logic $\mathsf{L}$ (defined as a set of formulas in $\mathcal{L}$ satisfying the usual closure properties), let $\mathsf{L}|_{alt}:= \mathsf{L} \cap \mathcal{L}_{alt}$. Then the general question is: for which modal logics $\mathsf{L}$ and $\mathsf{L}'$ are $\mathsf{L}|_{alt}$ and $\mathsf{L}'|_{alt}$ the same? 

More specifically, since we are mainly interested in the introspection axioms $\mathsf{4}$ and $\mathsf{5}$, we focus on the logics appearing in the classic modal logic cube shown in Figure \ref{fig:systems} below.\footnote{Figure \ref{fig:systems} is reproduced from \cite{Garson2018}.} Our main result is that the two shaded areas in Figure \ref{fig:systems} are collapsed in $\mathcal{L}_{alt}$ but no other logics are. To establish this result, we need to first develop bisimulation and unraveling concepts for agent-alternating formulas.  

\begin{notation}
\normalfont{For convenience, we consider $alt$ as an object not in $A$. Also for any set $\mathcal{L}'$ of formulas, $\mathcal{M}, u\equiv_{\mathcal{L}'}\mathcal{N}, v$ means that for all $\varphi \in \mathcal{L}'$, $\mathcal{M}, u \models \varphi$ iff $\mathcal{N}, v \models \varphi$.}  
\end{notation}
\begin{definition}[Agent-alternating bisimulation relation]\normalfont{
  An \textit{agent-alternating bisimulation family} between two models $\mathcal{M}$ and $\mathcal{N}$ is a family of binary relations $\{\bisim_{a}\}_{a \in A\cup\{alt\}}$ between $\mathcal{M}$ and $\mathcal{N}$ such that for every $a \in A\cup\{alt\}$ and every $u \in \mathcal{M}$ and $v \in \mathcal{N}$ such that $u \bisim_{a} v$:
\begin{itemize}
    \item (Atom) for all $p \in \mathsf{Prop}$, $u \in V^{\mathcal{M}}(p)$ iff $v \in V^{\mathcal{N}}(p)$;
    \item (Zig) for all $b \in A \setminus \{a\}$ and  $u' \in R_x^{\mathcal{M}}(u)$, there is
    $v' \in R_x^{\mathcal{N}}(v)$ such that $u' \bisim_{b} v'$;
    \item (Zag) for all $b \in A \setminus \{a\}$ and  $v' \in R_x^{\mathcal{N}}(v)$, there is $u' \in R_x^{\mathcal{M}}(u)$ such that $u' \bisim_{b} v'$. 
\end{itemize}
Then we say $\mathcal{M}, u$ is \textit{agent-alternating bisimilar to $\mathcal{N}, v$} if there is an agent-alternating bisimulation family $\{\bisim_{a}\}_{a \in A \cup \{alt\}}$ between $\mathcal{M}$ and $\mathcal{N}$ such that $u \bisim_{alt} v$.} 
\end{definition}

\begin{lemma}
\label{lem:bisim-inva}
For any models $\mathcal{M}$ and $\mathcal{N}$, agent-alternating bisimulation family $\{\bisim_{a}\}_{a \in A \cup \{alt\}}$ between $\mathcal{M}$ and $\mathcal{N}$, and $a \in A$, if $u \bisim_a v$, then $\mathcal{M}, u \equiv_{\mathcal{L}_{-a}} \mathcal{N}, v$, and if $u \bisim_{alt} v$, then $\mathcal{M}, u \equiv_{\mathcal{L}_{alt}} \mathcal{N}, v$.
\end{lemma}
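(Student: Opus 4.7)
The plan is to prove both halves of the lemma simultaneously by induction on the complexity of the formula. After fixing an agent-alternating bisimulation family $\{\bisim_a\}_{a\in A\cup\{alt\}}$ between $\mathcal{M}$ and $\mathcal{N}$, I would show by induction on $\varphi$ that (i) for each $a \in A$, whenever $\varphi \in \mathcal{L}_{-a}$ and $u \bisim_a v$, we have $\mathcal{M}, u\models\varphi$ iff $\mathcal{N}, v\models\varphi$, and simultaneously (ii) whenever $\varphi \in \mathcal{L}_{alt}$ and $u \bisim_{alt} v$, we have $\mathcal{M}, u\models\varphi$ iff $\mathcal{N}, v\models\varphi$. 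Both fragments contain proposition letters and are closed under Boolean connectives, so the atomic case follows from the Atom clause and the Boolean cases reduce immediately to the inductive hypothesis on immediate subformulas.

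The only substantive case is a top-level modality. For (i), suppose $\varphi = \Box_x\psi$ is in $\mathcal{L}_{-a}$; by Definition \ref{def:inductive-def}, $x \in A\setminus\{a\}$ and $\psi \in \mathcal{L}_{-x}$. Given $u\bisim_a v$, I would take $b := x$ in the Zig clause: for any $u' \in R_x^{\mathcal{M}}(u)$ there is $v' \in R_x^{\mathcal{N}}(v)$ with $u' \bisim_x v'$, and then applying the inductive hypothesis to $\psi \in \mathcal{L}_{-x}$ under the relation $\bisim_x$ yields $\mathcal{M}, u'\models\psi$ iff $\mathcal{N}, v'\models\psi$; Zag handles the converse direction. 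For (ii), the atomic and Boolean cases are as above, and the remaining case is $\varphi = \chi \in \bigcup_{c\in A}\mathcal{L}_{-c}$; since $alt \notin A$ we have $A\setminus\{alt\} = A$, so the Zig/Zag clauses for $\bisim_{alt}$ apply with any $b \in A$, and the modal subcase reduces exactly as in (i) with $a$ replaced by $alt$.

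The main point that must be checked — and the reason the family of relations is indexed by $A \cup \{alt\}$ rather than being a single relation — is that descending one modal step from $\Box_x\psi$ must land us in a world-pair related by $\bisim_x$ specifically, because the subformula $\psi \in \mathcal{L}_{-x}$ may again begin with a modality $\Box_y$ for any $y \in A \setminus \{x\}$, possibly with $y = a$; only $\bisim_x$ guarantees the appropriate Zig/Zag behavior at the next step. This is exactly what the definition of agent-alternating bisimulation has been designed to supply, so no real obstacle arises; the work consists of verifying that the indices on $\bisim$ line up correctly at each inductive step and that each invocation of the inductive hypothesis is applied to the subformula of the right syntactic class.
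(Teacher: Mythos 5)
Your proposal is correct and is exactly the argument the paper intends: the paper compresses it to ``a simple induction on modal depth,'' and you have carried out that simultaneous induction, correctly tracking how the index on $\bisim$ must match the syntactic class ($\mathcal{L}_{-x}$ under $\bisim_x$) at each modal step. Nothing further is needed.
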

\begin{proof}
A simple induction on modal depth. 
\end{proof}

\begin{definition}[Agent-alternating unraveling]
\label{def:alt-unraveling}\normalfont{
  Given a model $\mathcal{M} = \langle W^{\mathcal{M}}, \{R^{\mathcal{M}}_a\}_{a\in A}, V^{\mathcal{M}} \rangle$, its \textit{agent-alternating unravelings} are all models of the form $\langle S, \{R_a\}_{a \in A}, V\rangle$ satisfying the following conditions: 
  \begin{itemize}
    \item $S$ is the set of all nonempty finite sequences $s$ of pairs in $(A \cup \{alt\}) \times W^{\mathcal{M}}$ such that
    \begin{itemize}
    \item[(1)] ${s_1 \in \{alt\} \times W^{\mathcal{M}}}$, \item[(2)] ${s_i \in A\times W^{\mathcal{M}}}$ for all $i = 2 \ldots \len(t)$, and 
    \item[(3)] letting $\langle a_i, w_i \rangle = s_i$ for all $i = 1 \ldots \len(s)$, $w_i R^{\mathcal{M}}_{a_{i+1}}w_{i+1}$ and $a_i \not= a_{i+1}$ for all $i = 1 \ldots \len(s) - 1$;
    \end{itemize}
    \item for all $a \in A\cup\{alt\}$ and $s \in S$ such that $s_{\len(s)} = \langle a, w\rangle$, for all $b\in A \setminus \{a\}$, $R_b(s) = \{s + \langle b, w'\rangle \mid w \in R^{\mathcal{M}}_b(w))\}$ (note that this is precisely $\{t \in S \mid s = t_{1 \ldots \len(t)-1}, t_{\len(t)} \in \{b\}\times W^{\mathcal{M}}\}$;
    \item for every $s \in S$ and $p \in \mathsf{Prop}$, $s \in V(p)$ iff $s_{\len(s)} \in {(A \cup \{alt\}) \times V^{\mathcal{M}}(p)}$.
  \end{itemize}
  
  Let $Alt(\mathcal{M})$ denote the set of all agent alternating unraveling of $\mathcal{M}$. Then for every $\mathcal{N} \in Alt(\mathcal{M})$, we define a family of binary relations between $\mathcal{M}$ and $\mathcal{N}$, which we denote as $\{P^{\mathcal{N}}_a\}_{a \in A\cup\{alt\}}$, by
  \[
 u P^{\mathcal{N}}_a s  \iff s_{\len(s)} = \langle a, u\rangle.
  \]}
\end{definition}

\begin{lemma}
\label{lem:alt-unravel-bisim}
  For any model $\mathcal{M}$ and $\mathcal{N} \in Alt(\mathcal{M})$, $\{P^{\mathcal{N}}_a\}_{a \in A\cup\{alt\}}$ is an agent-alternating bisimulation family between $\mathcal{M}$ and $\mathcal{N}$. Consequently, by Lemma \ref{lem:bisim-inva}, for every $w \in \mathcal{M}$, $\mathcal{M}, w \equiv_{\mathcal{L}_{alt}} \mathcal{N}, \langle \langle alt, w\rangle \rangle$.
\end{lemma}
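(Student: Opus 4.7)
The plan is to verify the three defining conditions (Atom, Zig, Zag) directly for each relation $P^{\mathcal{N}}_a$, after which the second sentence follows by instantiating Lemma~\ref{lem:bisim-inva} at the single-element sequence $\langle\langle alt, w\rangle\rangle$. Fix $\mathcal{N}\in Alt(\mathcal{M})$, and suppose $u\, P^{\mathcal{N}}_a\, s$, i.e., $s_{\len(s)}=\langle a, u\rangle$.

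First I would dispatch (Atom): by the valuation clause in Definition~\ref{def:alt-unraveling}, $s\in V(p)$ iff $s_{\len(s)}\in (A\cup\{alt\})\times V^{\mathcal{M}}(p)$, which, given that $s_{\len(s)}=\langle a,u\rangle$, is equivalent to $u\in V^{\mathcal{M}}(p)$.

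Next I would verify (Zig). Let $b\in A\setminus\{a\}$ and $u'\in R^{\mathcal{M}}_b(u)$. Put $s':= s + \langle b, u'\rangle$. To see $s'\in S$, note that conditions (1) and (2) are inherited from $s$ together with the fact that $\langle b,u'\rangle\in A\times W^{\mathcal{M}}$; for condition (3) I need the last-step check $w_{\len(s)}R^{\mathcal{M}}_b u'$ and $a_{\len(s)}\neq b$, both of which hold since $s_{\len(s)}=\langle a,u\rangle$, $u R^{\mathcal{M}}_b u'$, and $b\neq a$. By the definition of $R_b$ in the unraveling, $s'\in R_b(s)$; and $s'_{\len(s')}=\langle b,u'\rangle$ gives $u'\, P^{\mathcal{N}}_b\, s'$, as required. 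For (Zag), any $s'\in R_b(s)$ with $b\in A\setminus\{a\}$ is forced by the definition of $R_b$ to have the form $s + \langle b, u'\rangle$ with $u'\in R^{\mathcal{M}}_b(u)$; then $u'$ witnesses the Zag clause, again because $s'_{\len(s')}=\langle b,u'\rangle$.

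For the consequence, I would observe that the one-element sequence $\langle\langle alt, w\rangle\rangle$ lies in $S$ (condition (1) is met, and conditions (2)--(3) are vacuous), its last element is $\langle alt, w\rangle$, so $w\, P^{\mathcal{N}}_{alt}\, \langle\langle alt, w\rangle\rangle$; then Lemma~\ref{lem:bisim-inva} yields $\mathcal{M},w\equiv_{\mathcal{L}_{alt}}\mathcal{N},\langle\langle alt, w\rangle\rangle$.

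I do not anticipate a genuine obstacle: the verification is essentially bookkeeping built into the unraveling construction. The one place that deserves a careful sentence is the closure of $S$ under the extension $s\mapsto s+\langle b, u'\rangle$ in the Zig step, precisely because it is there that the alternation constraint $a_i\neq a_{i+1}$ from clause~(3) is used, which is why the Zig and Zag conditions of an agent-alternating bisimulation restrict $b$ to $A\setminus\{a\}$ rather than all of $A$.
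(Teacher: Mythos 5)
Your proof is correct and matches the paper's approach: the paper simply declares the lemma ``immediate from Definition~\ref{def:alt-unraveling}'' and you have supplied exactly the routine verification of (Atom), (Zig), and (Zag) that this claim presupposes, including the one point that genuinely matters, namely that the Zig/Zag clauses only quantify over $b \in A \setminus \{a\}$, so the unraveling's silence about $R_a(s)$ when $s$ ends in $\{a\}\times W^{\mathcal{M}}$ never causes trouble. No gaps.
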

\begin{proof}
Immediate from  Definition \ref{def:alt-unraveling} and the recursive structure of $\mathcal{L}_{alt}$ as defined in Definition \ref{def:inductive-def}.
\end{proof}

Now we can formally state our main result. 
\begin{theorem}
\label{thm:collapsing-in-alt}
Among the systems displayed in Figure \ref{fig:systems}:
\begin{enumerate}
    \item $\mathsf{K}|_{alt} = \mathsf{K4}|_{alt} = \mathsf{K5}|_{alt} = \mathsf{K45}|_{alt} = \mathsf{KB}|_{alt}$;
    \item $\mathsf{KD}|_{alt} = \mathsf{KD4}|_{alt} = \mathsf{KD5}|_{alt} = \mathsf{KD45}|_{alt} = \mathsf{KDB}|_{alt}$;
    \item no other collapse happens when restricting to $\mathcal{L}_{alt}$. 
\end{enumerate}
The results are summarized in Figure \ref{fig:systems-in-alt}, where systems in the same shaded region in Figure \ref{fig:systems} collapse.
\end{theorem}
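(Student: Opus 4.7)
The plan is to prove items~1 and~2 uniformly via the agent-alternating unraveling, and item~3 by exhibiting distinguishing $\mathcal{L}_{alt}$-formulas. By the completeness of each listed system (over its usual frame class), and since each system contains $\mathsf{K}$ or $\mathsf{KD}$ with $\mathsf{K}\subseteq\mathsf{K4}\subseteq\mathsf{K45}$, $\mathsf{K}\subseteq\mathsf{K5}\subseteq\mathsf{K45}$, $\mathsf{K}\subseteq\mathsf{KB}$ (and similarly for $\mathsf{KD}$), it suffices for items 1 and 2 to prove the four inclusions $\mathsf{K45}|_{alt}, \mathsf{KB}|_{alt} \subseteq \mathsf{K}|_{alt}$ and $\mathsf{KD45}|_{alt}, \mathsf{KDB}|_{alt} \subseteq \mathsf{KD}|_{alt}$. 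Contrapositively, I must transform an arbitrary $\mathsf{K}$-countermodel (resp.\ $\mathsf{KD}$-countermodel) to a formula $\varphi \in \mathcal{L}_{alt}$ into a countermodel of the target stronger system.

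Given such a refuting $\mathcal{M}, w$, I first pass to the agent-alternating unraveling $\mathcal{N}$, at whose root $\langle\langle alt, w\rangle\rangle$ the formula $\varphi$ remains refuted by Lemma~\ref{lem:alt-unravel-bisim}. The key structural property of $\mathcal{N}$ is that $R_a(s)=\emptyset$ at every node $s$ whose last label is $a$; so $R_a$ decomposes globally into disjoint ``stars'' rooted at nodes whose last label is not $a$. I will obtain a $\mathsf{K45}$-model $\mathcal{N}^+$ by taking the transitive-and-Euclidean closure of each $R_a$ (which, because each $s$ with last label $a$ has a unique $R_a$-predecessor in $\mathcal{N}$, simply promotes each star into a complete directed cluster), and a $\mathsf{KB}$-model by taking the symmetric closure. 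For item 2 I must additionally secure seriality: adding self-loops $sR_as$ at every $s$ with last label $a$ handles $\mathsf{KD}$ and $\mathsf{KD4}$, while the Euclidean or symmetric closures used in item 1 already reconnect each such $s$ back to its predecessor or cluster, giving seriality automatically for $\mathsf{KDB}$, $\mathsf{KD5}$, and $\mathsf{KD45}$. Crucially, in every construction each new $R_a$-edge has source at some node $s$ with last label $a$; and by the recursive grammar of Definition~\ref{def:inductive-def}, at such $s$ the only $\mathcal{L}_{alt}$-subformulas one ever evaluates lie in $\mathcal{L}_{-a}$ and so never consult $R_a(s)$. A short induction on $\mathcal{L}_{alt}$-subformulas then shows that $\varphi$ is still refuted at the root of the modified model, finishing items 1 and 2.

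For item 3 I will exhibit distinguishing $\mathcal{L}_{alt}$-formulas. The formula $\Diamond_a\top$ is a theorem of every $\mathsf{KD}$-extension but of no logic in the first shaded region, and $\Box_a p\to p$ is a theorem of every $\mathsf{T}$-containing logic ($\mathsf{KT}, \mathsf{KT4}, \mathsf{KT5}, \mathsf{KTB}, \mathsf{KT45}$) but of no logic in either shaded region; together these separate the two shaded regions from each other and from all $\mathsf{T}$-containing logics. The main obstacle is to separate the $\mathsf{T}$-containing logics from one another in $\mathcal{L}_{alt}$, since the axioms $\mathsf{4}$, $\mathsf{5}$, and $\mathsf{B}$ are not themselves agent-alternating; indeed, the unraveling technique above breaks precisely here, because reflexive closure of $R_b$ at a node $s$ with last label $a\neq b$ would add an edge $sR_bs$ that does affect the truth of $\Box_b\psi$ at $s$, destroying the key invariant. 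Distinguishing countermodels for the $\mathsf{T}$-containing logics must therefore be constructed by hand, exploiting the genuine interaction between reflexivity and the introspection axioms across agents.
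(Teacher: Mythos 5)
Your treatment of items 1 and 2 is correct and is essentially the paper's own argument (Propositions \ref{prop:collapse-4-5} and \ref{prop:collapse-B}): the models you obtain by taking the transitive--Euclidean closure (resp.\ symmetric closure) of each star are exactly the members of $Alt(\mathcal{M})$ the paper picks by setting $R_a(s) = R_a(s_{1\ldots\len(s)-1})$ (resp.\ $R_a(s)=\{s_{1\ldots\len(s)-1}\}$) at nodes $s$ ending in agent $a$, and your observation that new $R_a$-edges emanate only from such nodes is precisely why Lemma \ref{lem:alt-unravel-bisim} applies unchanged --- the definition of $Alt(\mathcal{M})$ is deliberately silent about $R_a(s)$ there, so you could skip the two-step ``unravel, then close'' presentation and apply the lemma directly. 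Your seriality check for item 2 is also right.

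Item 3, however, is left with a genuine gap. The axioms $\mathsf{D}$ and $\mathsf{T}$ only separate the three horizontal layers of Figure \ref{fig:systems}; they do nothing to distinguish $\mathsf{T}$, $\mathsf{S4}$, $\mathsf{B}$, $\mathsf{S5}$ from one another, nor to show that $\mathsf{KB5}$ stays apart from the bottom shaded region and from $\mathsf{S5}$. You correctly diagnose \emph{why} the unraveling technique breaks under reflexivity, but then you merely assert that distinguishing witnesses ``must be constructed by hand'' without constructing any --- and producing them is the actual content of this part of the theorem (the paper's Proposition \ref{prop:non-collapsings}). The key realization you are missing is that reflexivity of the \emph{other} agent's relation lets an agent introspect inside $\mathcal{L}_{alt}$: from $\Box_b \Box_a p \to \Box_a p$ (an instance of $\mathsf{T}$ for $b$) one derives the agent-alternating formula $\Diamond_a\Box_b\Box_a p \to p$ in $\mathsf{B}$ but not in $\mathsf{S4}$, and dually $\Diamond_a\Box_b\Diamond_a p \to \Diamond_a p$ is in $\mathsf{S4}$ but not in $\mathsf{B}$; guarding with $\Diamond_b(p\lor\lnot p)$ in the antecedent (so that $\Box_b$ behaves factively) yields analogous witnesses in $\mathsf{KB5}$. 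Together with small concrete countermodels showing non-derivability, these four nonemptiness claims are what actually pin down all remaining non-collapses, and without them item 3 is unproved.
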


\begin{proof}
    Combining Proposition \ref{prop:collapse-4-5}, \ref{prop:collapse-B}, and \ref{prop:non-collapsings} below, we have all the collapsing and non-collapsing results in the three layers of Figure \ref{fig:systems}. To see that the three layers do not collapse, it is enough to observe that the axioms $\mathsf{D}$ and $\mathsf{T}$ are in $\mathcal{L}_{alt}$. 
\end{proof}

\begin{figure}[h]
\begin{minipage}{.5\textwidth}
    \centering
    \vspace*{6pt}
    \begin{tikzpicture}[shorten >=1pt,node distance=2cm,on grid,auto, scale=0.5] 
    \node[draw,scale=0.6] at (0, 0, 0) (k4) {$\mathsf{K4}$};
    \node[draw,scale=0.6] at (0, 6, 6) (m) {$\mathsf{T}$};
    \node[draw,scale=0.6] at (0, 3, 6) (kd) {$\mathsf{KD}$};
    \node[draw,scale=0.6] at (7, 0, 6) (kb) {$\mathsf{KB}$};
    \node[draw,scale=0.6] at (7, 3, 6) (kdb) {$\mathsf{KDB}$};
    \node[draw,scale=0.6] at (7, 6, 6) (b) {$\mathsf{B}$};
    \node[draw,scale=0.6] at (0, 0, 6) (k) {$\mathsf{K}$};
    \node[draw,scale=0.6] at (3, 0, 0) (k45) {$\mathsf{K45}$};
    \node[draw,scale=0.6] at (2, 0, 2.5) (k5) {$\mathsf{K5}$};
    \node[draw,scale=0.6] at (6, 0, 0) (kb5) {$\mathsf{KB5}$};
    \node[draw,scale=0.6] at (0, 3, 0) (kd4) {$\mathsf{KD4}$};
    \node[draw,scale=0.6] at (3, 3, 0) (kd45) {$\mathsf{KD45}$};
    \node[draw,scale=0.6] at (1, 2, 0) (kd5) {$\mathsf{KD5}$};
    \node[draw,scale=0.6] at (0, 5.5, 0) (s4) {$\mathsf{S4}$};
    \node[draw,scale=0.6] at (6, 5.5, 0) (s5) {$\mathsf{S5}$};
    \path[->] 
	(k) edge node {} (kd)
	(kd) edge node {} (m)
	(kd) edge node {} (kdb)
	(kd) edge node {} (kd4)
	(kd) edge node {} (kd5)
	(kb) edge node {} (kdb)
	(kdb) edge node {} (b)
	(kd4) edge node {} (kd45)
	(kd4) edge node {} (s4)
	(kd5) edge node {} (kd45)
	(kd45) edge node {} (s5)
	(m) edge node {} (b)
	(m) edge node {} (s4)
	(b) edge node {} (s5)
	(s4) edge node {} (s5)
	(k) edge node {} (kb)
	(kb) edge node {} (kb5)
	(kb5) edge node {} (s5)
	(kb5) edge node {} (s5)
	(k4) edge node {} (k45)
	(k4) edge node {} (kd4)
	(k5) edge node {} (k45)
	(k5) edge node {} (kd5)
	(k45) edge node {} (kb5)
	(k45) edge node {} (kd45)
	(k) edge node {} (k5)
	(k) edge node {} (k4);
	\path[fill=gray,opacity=0.2]
	(0,3,6) -- (0,3,0) -- (3,3,0) -- (7,3,6) -- cycle;
	
	\path[fill=gray,opacity=0.2]
	(0,0,6) -- (0,0,0) -- (3,0,0) -- (7,0,6) -- cycle;
    \end{tikzpicture}
    \vspace*{6pt}
    \caption{Systems of interest}
    \label{fig:systems}
\end{minipage}
\begin{minipage}{.5\textwidth}
    \centering
    \begin{tikzpicture}[shorten >=1pt,node distance=3cm,on grid,auto, scale=0.5] 
   \node[draw,scale=0.6] at (0, 6, 6) (t) {$\mathsf{T}|_{alt}$};
   \node[draw,scale=0.6,align=left] at (0, 3, 6) (kd) 
   {
   $\mathsf{KD}|_{alt}$ \\ 
   $(=\mathsf{KD4}|_{alt},
    \mathsf{KD5}|_{alt},$ \\
    $
    \mathsf{KD45}|_{alt},
    \mathsf{KDB}|_{alt})$
   };
   \node[draw,scale=0.6] at (7, 6, 6) (b) {$\mathsf{B}|_{alt}$};
   \node[draw,scale=0.6,align=left] at (0, 0, 6) (k) 
   {
   $\mathsf{K}|_{alt}$ \\ 
   $(=\mathsf{K4}|_{alt},
    \mathsf{K5}|_{alt},$ \\
    $\mathsf{K45}|_{alt},
    \mathsf{KB}|_{alt})$
   };
   \node[draw,scale=0.6] at (6, 0, 0) (kb5) {$\mathsf{KB5}|_{alt}$};
   \node[draw,scale=0.6] at (0, 5.5, 0) (s4) {$\mathsf{S4}|_{alt}$};
   \node[draw,scale=0.6] at (6, 5.5, 0) (s5) {$\mathsf{S5}|_{alt}$};
   \path[->] 
	(k) edge node {} (kd)
	(k) edge node {} (kb5)
	(kd) edge node {} (t)
	(t) edge node {} (b)
	(t) edge node {} (s4)
	(b) edge node {} (s5)
	(s4) edge node {} (s5)
	(kb5) edge node {} (s5);
\end{tikzpicture}
    \caption{Systems when restricted to $\mathcal{L}_{alt}$}
    \label{fig:systems-in-alt}
\end{minipage}
\end{figure}

\newpage

\begin{proposition}[Collapsing $\mathsf{4}$ and $\mathsf{5}$]
\label{prop:collapse-4-5}
$\mathsf{K}|_{alt} = \mathsf{K45}|_{alt}$ and $\mathsf{KD}|_{alt} = \mathsf{KD45}|_{alt}$
\end{proposition}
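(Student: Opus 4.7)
The plan is a semantic countermodel-transformation. The inclusions $\mathsf{K}|_{alt} \subseteq \mathsf{K45}|_{alt}$ and $\mathsf{KD}|_{alt} \subseteq \mathsf{KD45}|_{alt}$ hold trivially from $\mathsf{K} \subseteq \mathsf{K45}$ and $\mathsf{KD} \subseteq \mathsf{KD45}$, so the content is in the reverse inclusions: any Kripke (respectively serial) countermodel of an agent-alternating $\phi$ should be turned into a transitive-Euclidean (respectively serial-transitive-Euclidean) countermodel. Given $\mathcal{M}, w \not\models \phi$, I would first pass to the agent-alternating unraveling $\mathcal{N} \in Alt(\mathcal{M})$, so that Lemma \ref{lem:alt-unravel-bisim} gives $\mathcal{N}, r \not\models \phi$ at the root $r = \langle\langle alt, w\rangle\rangle$.

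The key observation about $\mathcal{N}$ is that each $R_a$ is \emph{already} vacuously transitive: by clause (3) of Definition \ref{def:alt-unraveling}, any $R_a$-successor of $s$ ends in an $a$-labelled pair and therefore has no further $R_a$-successors. What is missing for $\mathsf{K45}$ is the Euclidean property, which I would force by modifying $\mathcal{N}$ to a model $\mathcal{N}'$ with the same worlds and valuation but with each relation replaced by its ``universal cluster closure''
\[
R'_a := R_a \cup \{(t,t') \mid \exists s_0\ (s_0\,R_a\,t \text{ and } s_0\,R_a\,t')\}.
\]
Because in $\mathcal{N}$ each state ending in an $a$-labelled pair has a unique $R_a$-predecessor (namely its prefix of length $\len(s)-1$) and no $R_a$-successors, a straightforward case analysis shows that each $R'_a$ is transitive and Euclidean. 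Moreover, if $R^{\mathcal{M}}_a$ is serial then so is $R'_a$: non-$a$-post states inherit successors directly from $\mathcal{M}$ via $R_a$, while $a$-post states acquire self-loops from the closure. This immediately handles the $\mathsf{KD}$ case.

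The main obstacle is to show that the modification preserves truth of agent-alternating formulas at the root. I would prove by simultaneous structural induction the following statement, indexed by $a \in A \cup \{alt\}$ (setting $\mathcal{L}_{-alt} := \mathcal{L}_{alt}$): for every $\chi \in \mathcal{L}_{-a}$ and every $s \in \mathcal{N}$ whose last-pair label is $a$, $\mathcal{N}, s \models \chi$ iff $\mathcal{N}', s \models \chi$. Atomic and Boolean cases are immediate from the definition of $\mathcal{N}'$. For $\chi = \Box_y \psi$, Definition \ref{def:inductive-def} forces $y \neq a$ and $\psi \in \mathcal{L}_{-y}$; the precondition gives $s$'s last label $a \neq y$, so $s$ is not an $R_y$-successor of any state and hence $R'_y(s) = R_y(s)$, making the sets of $R_y$- and $R'_y$-successors of $s$ identical. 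Each such successor $t$ has last label $y$, exactly the precondition to invoke the induction hypothesis on the proper subformula $\psi \in \mathcal{L}_{-y}$. Applying the statement at the root ($a = alt$) gives $\mathcal{N}', r \not\models \phi$, exhibiting the desired $\mathsf{K45}$ (respectively $\mathsf{KD45}$) countermodel.
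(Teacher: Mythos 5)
Your proposal is correct and is essentially the paper's own argument: the model you obtain by taking the ``universal cluster closure'' of the unraveling is exactly the member of $Alt(\mathcal{M})$ the paper selects by setting $R_a(s) = R_a(s_{1 \ldots \len(s)-1})$ for $a$-post states $s$, and your verification of transitivity, Euclideanness, and seriality matches the paper's. The only difference is packaging: the paper notes that Definition~\ref{def:alt-unraveling} leaves $R_a$ on $a$-post states unconstrained, so Lemma~\ref{lem:alt-unravel-bisim} already yields $\mathcal{L}_{alt}$-equivalence for the modified model, whereas your two-step construction re-derives that preservation by a separate (correct) induction.
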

\begin{proof}
The right-to-left direction of both equations is trivial. For the left-to-right direction, by completeness, we need only show that for every $\varphi \in \mathcal{L}_{alt}$, if $\varphi$ is satisfied by a pointed model, then it is also satisfied by a pointed model based on a transitive and Euclidean frame. Further, if the first model is based on a serial frame, then the frame of the second model is also serial. So it is enough to show the following: for every pointed model $\mathcal{M}, u$, there exists a pointed model $\mathcal{N}, v$ such that:
\begin{enumerate}
    \item if for every $a \in A$, $R^{\mathcal{M}}_{a}$ is serial, then for every $a \in A$, $R^{\mathcal{N}}_{a}$ is also serial;
    \item for every $a \in A$, $R^{\mathcal{N}}_{a}$ is transitive and Euclidean;
    \item $\mathcal{M}, u \equiv_{\mathcal{L}_{alt}} \mathcal{N}, v$.
\end{enumerate}

Now let $\mathcal{N} = \langle S, \{R_a\}_{a \in A}, V\rangle$ be constructed by adding to the definition of being in $Alt(\mathcal{M})$ as in Definition \ref{def:alt-unraveling} the following:
  \begin{itemize}
    \item for all $a \in A$ and $s \in S$ such that $s_{\len(s)}\in \{a\} \times W^{\mathcal{M}}$, $R_a(s) = R_a(s_{1 \ldots \len(s)-1})$.
  \end{itemize}
This construction is possible because crucially the definition of being an agent-alternating unraveling of $\mathcal{M}$ is silent on what $R_a(s)$ should be when $s$ ends in $\{a\} \times W^{\mathcal{M}}$ for $a \in A$. Also, when $s$ ends in $\{a\} \times W^{\mathcal{M}}$ for some $a \in A$, $\len(s) > 1$ and $s_{1 \ldots \len(s) - 1}$ does not end in $\{a\} \times W^{\mathcal{M}}$, which means that $R_a(s_{1 \ldots \len(s) - 1})$ is defined in Definition \ref{def:alt-unraveling}.

Now we can show that $\mathcal{N}, \langle \langle alt, u\rangle \rangle$ satisfies all the requirements. It is not hard to see that if $\mathcal{M}$ is serial, then so is $\mathcal{N}$. The key observation is that for any $s \in S$, letting $\langle a, w\rangle = s_{len(s)}$, $R_a(s)$ must include $s$, and $R_b(s)$ for any $b \in A \setminus \{s\}$ must be nonempty since $R_b^{\mathcal{M}}(w)$ is nonempty. Hence we are done with (1). To see that for every $a \in A$, $R_a$ is transitive and Euclidean, note that for any $s \in S$, letting $\langle x, w\rangle = s_{\len(s)}$, we have the following:
\begin{itemize}
    \item If $x \not= a$, then for every $t \in R_a(s)$, $t$ ends in $\{a\} \times W^{\mathcal{M}}$, and $t_{1 \ldots \len(t)-1} = s$. This means that our construction above applies to $t$ and $R_a(t) = R_a(s)$.
    \item If $x = a$, then our construction above applies to $s$: letting $s^0 = s_{1 \ldots \len(s) - 1}$, $s^0$ does not end in $\{a\} \times W^{\mathcal{M}}$, and $R_a(s) = R_a(s^0)$ by our definition. Then it is easy to see that for every $t \in R_a(s)$, $t \in R_a(s^0)$, and $t_{1 \ldots \len(t) - 1}$ is also $s^0$. This means that $t$ ends in $\{a\}\times W^{\mathcal{M}}$, and our construction above also applies to $t$. Hence $R_a(t) = R_a(s^0) = R_a(s)$.
\end{itemize}
Adding the above two points together, we have shown that for every $s \in S$ and $t \in R_a(s)$, $R_a(t) = R_a(s)$. This is precisely transitivity plus Euclideanness.

By Lemma \ref{lem:alt-unravel-bisim}, $\mathcal{M}, u \equiv_{\mathcal{L}_{alt}} \mathcal{N}, \langle\langle alt, u\rangle\rangle$ since $\{P_{a}^{\mathcal{N}}\}_{a \in A \cup \{alt\}}$ is an agent-alternating bisimulation family and $u P_{alt}^{\mathcal{N}} \langle \langle alt, u \rangle\rangle$. Thus, all three requirements are satisfied, so we are done. 
\end{proof}

\begin{proposition}[Collapsing $\mathsf{B}$]
\label{prop:collapse-B}
$\mathsf{K}|_{alt} = \mathsf{KB}|_{alt}$ and $\mathsf{KD}|_{alt} = \mathsf{KDB}|_{alt}$
\end{proposition}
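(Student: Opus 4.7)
The plan is to mirror the proof of Proposition~\ref{prop:collapse-4-5} closely. The right-to-left inclusions are trivial. For the other direction, by soundness and completeness, it suffices to show that for every pointed model $\mathcal{M}, u$ there is a pointed model $\mathcal{N}, v$ whose accessibility relations are all symmetric---and also serial whenever $\mathcal{M}$ is serial---such that $\mathcal{M}, u \equiv_{\mathcal{L}_{alt}} \mathcal{N}, v$.

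For the construction I would again start from an agent-alternating unraveling of $\mathcal{M}$ in the sense of Definition~\ref{def:alt-unraveling} and exploit the fact that that definition leaves $R_a(s)$ unspecified exactly when $s_{\len(s)} \in \{a\} \times W^{\mathcal{M}}$ for some $a \in A$. For each such $s$, set $R_a(s) := \{s_{1 \ldots \len(s)-1}\}$, i.e., the unique parent sequence. This is well-defined because $s$ begins with an $\{alt\} \times W^{\mathcal{M}}$ entry and alternates, so $\len(s) \ge 2$ whenever $s$ ends in $A \times W^{\mathcal{M}}$. All other clauses for $R_b$ are inherited from Definition~\ref{def:alt-unraveling}.

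Symmetry is then immediate: every forward edge $s \, R_b \, (s + \langle b, w \rangle)$ is matched by the newly added backward edge from $s + \langle b, w \rangle$ to $s$, and conversely every new backward edge $s \, R_b \, s_{1 \ldots \len(s)-1}$ is the reverse of a forward edge because alternation forces $s_{1 \ldots \len(s)-1}$ not to end in $b$. Seriality, if assumed of $\mathcal{M}$, carries over: for $s$ ending in $\langle a, w \rangle$ and $b \ne a$, nonemptiness of $R_b^{\mathcal{M}}(w)$ yields forward successors, and when $a \in A$ the set $R_a(s) = \{s_{1 \ldots \len(s)-1}\}$ is nonempty by construction.

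The only delicate point is $\mathcal{L}_{alt}$-equivalence, and this is where I expect any reader resistance. The bisimulation family $\{P^{\mathcal{N}}_a\}_{a \in A \cup \{alt\}}$ of Lemma~\ref{lem:alt-unravel-bisim} is defined by the last pair of a sequence, and its Zig/Zag conditions for $u P^{\mathcal{N}}_a s$ only range over $b \in A \setminus \{a\}$. Hence whenever these conditions query $R_b(s)$, the sequence $s$ ends in $a \ne b$, which is precisely the case handled by the original clauses of Definition~\ref{def:alt-unraveling}; the newly added backward edges are never probed. Therefore $\{P^{\mathcal{N}}_a\}$ remains an agent-alternating bisimulation family, and Lemma~\ref{lem:bisim-inva} gives $\mathcal{M}, u \equiv_{\mathcal{L}_{alt}} \mathcal{N}, \langle \langle alt, u \rangle \rangle$. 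The main obstacle, such as it is, lies in resisting the temptation to add a larger symmetric closure (e.g., reflexive loops, or all ancestors at every level): the minimal single-parent definition is what simultaneously secures symmetry, preserves seriality, and keeps the alternating bisimulation intact, precisely because agent-alternation prevents any $\mathcal{L}_{alt}$-formula from traversing one of the added edges.
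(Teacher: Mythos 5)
Your proof is correct and is essentially the paper's own argument: the paper likewise takes the agent-alternating unraveling with $R_a(s)=\{s_{1\ldots\len(s)-1}\}$ whenever $s$ ends in $\{a\}\times W^{\mathcal{M}}$, verifies symmetry by the same two-case analysis of forward and backward edges, and invokes Lemma~\ref{lem:alt-unravel-bisim} for $\mathcal{L}_{alt}$-equivalence. Your explicit checks of seriality and of why the added edges are never probed by the bisimulation family are details the paper leaves implicit under ``following the strategy of the proof of Proposition~\ref{prop:collapse-4-5},'' but they do not change the route.
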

\begin{proof}
  Following the strategy of the proof of Proposition \ref{prop:collapse-4-5}, we only need to show that for every pointed model $\mathcal{M}, u$, there exists an agent-alternating unraveling $\mathcal{N} = \langle S, \{R_a\}_{a \in A}, V\rangle$ of $\mathcal{M}$ such that for every $a \in A$, $R_a$ is symmetric. 
  
  Indeed, let $\mathcal{N}$ be the agent-alternating unraveling of $\mathcal{M}$ such that for every $a \in A$ and $s \in S$ such that $s_{\len(s)} \in \{a\} \times W^{\mathcal{M}}$, 
  $R_a(s) = \{s_{1 \ldots \len(s)-1}\}$. Then it is easy to see that for every $a \in A$, $R_a$ is symmetric: for every $s, t \in S$, if $s R_a t$, then we have the following.
  \begin{itemize}
      \item If $s_{\len(s)} \in \{a\} \times W^{\mathcal{M}}$, then $t$ must be $s_{1 \ldots \len(s)-1}$ by our construction. By the definition of unraveling, $t R_a s$.
      \item If $s_{\len(s)} \not\in \{a\} \times W^{\mathcal{M}}$, then $t$ must be $s + \langle a, w\rangle$ for some $w$ such that letting $\langle b, w_0\rangle = s_{\len(s)}$, $w_0 R^{\mathcal{M}}_a w$. Then our construction applies to $t$ and $t R_a s$. 
  \end{itemize}
  Putting the above two points together, $R_a$ is symmetric, so we are done.
\end{proof}

\begin{proposition}[Non-collapsing results]
\label{prop:non-collapsings}
$\mathsf{B}|_{alt} \setminus \mathsf{S4}|_{alt}$, $\mathsf{S4}|_{alt}\setminus\mathsf{B}|_{alt}$, $\mathsf{KB5}|_{alt}\setminus\mathsf{S4}|_{alt}$, $\mathsf{KB5}|_{alt}\setminus\mathsf{B}|_{alt}$ are all nonempty.
\end{proposition}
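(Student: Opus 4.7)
The plan is to exhibit, for each of the four claimed non-empty differences, an agent-alternating witness formula, verify its validity in the ``larger'' logic by a direct semantic argument on the frame conditions, and refute it in the ``smaller'' logic by a small finite pointed model. All my witnesses will use only two agents $a, b$ and at most two propositional letters. Agent-alternation will be straightforward in each case, since the only $\le$-comparable pair of $\Box_a$-occurrences in each formula is separated by an interposed $\Box_b$ or $\Diamond_b$ operator along the subformula chain.

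For the two $\mathsf{B}$--$\mathsf{S4}$ differences I would use alternating variants of the classical $\mathsf{B}$ and $\mathsf{4}$ axioms, inserting a vacuous $b$-step via reflexivity of $R_b$. For $\mathsf{B}|_{alt} \setminus \mathsf{S4}|_{alt}$ I propose $\phi_1 := p \to \Box_a \Diamond_b \Diamond_a p$: in any reflexive-symmetric frame, if $w \models p$ and $w R_a v$, then $v R_b v$ (reflexivity of $R_b$) and $v R_a w$ (symmetry of $R_a$) give $\Diamond_b \Diamond_a p$ at $v$; the reflexive-transitive pointed model on $\{w, v\}$ with the extra edge $w R_a v$, $R_b$ reflexive only, and $V(p) = \{w\}$ refutes $\phi_1$ at $w$, since $v$ has no $R_a$-edge to a $p$-world. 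For $\mathsf{S4}|_{alt} \setminus \mathsf{B}|_{alt}$ I propose $\phi_2 := \Diamond_a \Box_b \Diamond_a p \to \Diamond_a p$: the antecedent supplies $v$ with $w R_a v$ and $v \models \Diamond_a p$ (via the $R_b$-self-loop), hence some $v R_a u$ with $u \models p$, and transitivity of $R_a$ delivers $w R_a u$; the reflexive-symmetric non-transitive chain $\{w, v, u\}$ with $w \leftrightarrow v \leftrightarrow u$ in $R_a$, $R_b$ reflexive only, and $V(p) = \{u\}$ refutes $\phi_2$ at $w$.

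For the two $\mathsf{KB5}$ differences the main obstacle is that $\mathsf{KB5}$ lacks reflexivity, so $R_b$ can be empty at a world and the simple ``insert a reflexive $R_b$-step'' trick fails. My proposal is a single witness $\phi_3 := \Diamond_a p \to \Box_a(\Box_b q \lor \Diamond_b \Diamond_a p)$, with $q$ a fresh letter. To verify $\phi_3$ in $\mathsf{KB5}$: for any $v$ with $w R_a v$, if $v$ has no $R_b$-successor then $\Box_b q$ holds vacuously, while if $v$ has some $R_b$-successor then symmetry and Euclideanness of $R_b$ force $v R_b v$, and Euclideanness of $R_a$ applied to $w R_a u$ and $w R_a v$ transfers the $\Diamond_a p$-witness $u$ of $w$ to $v$, so $\Diamond_b \Diamond_a p$ holds at $v$ through the self-loop. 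The reflexive-symmetric pointed model $\{w, u, v\}$ with $R_a$-edges $w \leftrightarrow u, w \leftrightarrow v$ (but no $u$--$v$ edge), $R_b$ reflexive only, $V(p) = \{u\}$, and $V(q) = \emptyset$ refutes $\phi_3$ in $\mathsf{B}$, because at $v \in R_a(w)$ both $\Box_b q$ (since $v R_b v$ and $q$ is empty) and $\Diamond_b \Diamond_a p$ (since $v R_a$ reaches only $\{v, w\}$, neither a $p$-world) fail. The analogous reflexive-transitive model, with only the directed edges $w R_a u, w R_a v$ added to the reflexive loops and the same valuation, refutes $\phi_3$ in $\mathsf{S4}$ for the same reason. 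The $\Box_b q$ disjunct is the key device: it absorbs $\mathsf{KB5}$'s $R_b$-dead ends but becomes refutable in any reflexive $R_b$-frame simply by setting $q$ false.
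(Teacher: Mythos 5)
Your proof is correct: all four witnesses are agent-alternating, the semantic verifications in $\mathsf{B}$, $\mathsf{S4}$, and $\mathsf{KB5}$ go through, and the countermodels are sound. For the $\mathsf{B}$--$\mathsf{S4}$ separations you use essentially the paper's formulas --- your $\phi_1$ is the contrapositive (under $p\mapsto\lnot p$) of the paper's $\Diamond_a\Box_b\Box_a p \to p$, and $\phi_2$ is identical to its $\Diamond_a\Box_b\Diamond_a p\to\Diamond_a p$ --- together with the same two- and three-world countermodels. The $\mathsf{KB5}$ cases are where you genuinely diverge: the paper keeps the same two witnesses and guards against $R_b$-dead-ends with a conjunct $\Diamond_b(p\lor\lnot p)$ placed inside the leading $\Diamond_a$, using the $\mathsf{KB5}$-theorem $(\Diamond_b(p\lor\lnot p)\land\Box_b q)\to q$ as a surrogate for the missing $\mathsf{T}$ axiom; you instead use a single $\mathsf{5}$-style witness $\Diamond_a p\to\Box_a(\Box_b q\lor\Diamond_b\Diamond_a p)$ whose $\mathsf{KB5}$-validity rests on Euclideanness of $R_a$ rather than symmetry, with the disjunct $\Box_b q$ (for fresh $q$, made false in reflexive countermodels) absorbing the dead ends. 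Your version is slightly more economical, since one formula witnesses both $\mathsf{KB5}$ differences, while the paper's has the virtue of reusing exactly the countermodels $\mathcal{M}$ and $\mathcal{M}'$ from the first two separations; both guarding devices exploit the same fact about $\mathsf{KB5}$ frames, namely that any world with an $R_b$-successor is $R_b$-reflexive.
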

\begin{proof}
Let $a, b$ be two different elements in $A$. In $\mathsf{B}$ ($=\mathsf{KTB}$), we have the following theorems.
\begin{align*}
&\vdash_{\mathsf{B}} \Box_b\Box_a p \to \Box_a p & &(1) [\mathsf{T}] \\
&\vdash_{\mathsf{B}} \Diamond_a\Box_b\Box_a p \to \Diamond_a\Box_a p & & (2) [\mathsf{RM}, 1] \\
&\vdash_{\mathsf{B}} \Diamond_a\Box_a p \to p & & (3) [\mathsf{B}] \\
&\vdash_{\mathsf{B}} \Diamond_a\Box_b\Box_a p \to p & & (4) [\mathsf{MP}, 2, 3].
\end{align*}
Now the last formula, formula (4), is agent-alternating. However, $\not\vdash_{\mathsf{S4}}(4)$. Using soundness, it is enough to find an $\mathsf{S4}$ model refuting (4). Consider the following model:
\begin{center}
    \begin{tikzpicture}
        \node[world] (w1) [label=above:$w_1$] {};
        \node[world] (w2) [right=of w1,label=above:$w_2$,fill=gray,label=right:$p$] {};
        \node (M) [left=5mm of w1] {$\mathcal{M}$};
        
        \path (w1) edge[->] node {$a$} (w2);
        \path (w2) edge[loop below] node{$a, b$} (w2);
        \path (w1) edge[loop below] node{$a, b$} (w1);
    \end{tikzpicture}
\end{center}
By focusing on the restriction of $\mathcal{M}$ to $a$ and $b$, respectively, it is easy to see that $\mathcal{M}$ is  based on an $\mathsf{S4}$ frame. Indeed, the accessibility relation for $b$ is even an equivalence relation. Now, $\mathcal{M}, w_1 \models \Diamond_a\Box_b\Box_a p$ since $\mathcal{M}, w_2 \models \Box_b\Box_a p$. Also we have $\mathcal{M}, w_1 \not\models p$. Hence $\mathcal{M}, w_1 \not\models (4)$, and thus $\not\vdash_{\mathsf{S4}}(4)$. This shows that $\mathsf{B}|_{alt} \setminus \mathsf{S4}|_{alt}$ is nonempty. 

In the same spirit, $\Diamond_a\Box_b\Diamond_a p\to \Diamond_a p\in \mathsf{S4}|_{alt}\setminus\mathsf{B}|_{alt}$. The derivation of $\Diamond_a\Box_b\Diamond_a p \to \Diamond_a p$ in $\mathsf{S4}$ is essentially the same as above: using $\mathsf{T}$ we can eliminate the $\Box_b$ in between the two $\Diamond_a$'s. A symmetric countermodel of this formula is as follows.
\begin{center}
    \begin{tikzpicture}
        \node[world] (w1) [label=above:$w_1$] {};
        \node[world] (w2) [right=of w1,label=above:$w_2$] {};
        \node[world] (w3) [right=of w2,label=above:$w_3$, fill=gray, label=right:$p$] {};
        \node (M') [left=5mm of w1] {$\mathcal{M}'$};
        
        \path (w1) edge[<->] node {$a$} (w2);
        \path (w2) edge[<->] node {$a$} (w3);
        \path (w3) edge[loop below] node{$a, b$} (w3);
        \path (w2) edge[loop below] node{$a, b$} (w2);
        \path (w1) edge[loop below] node{$a, b$} (w1);
    \end{tikzpicture}
\end{center}

In $\mathsf{KB5}$ we do not have the $\mathsf{T}$ axiom. So $\Diamond_a\Box_b\Box_a p \to p$ and $\Diamond_a\Box_b\Diamond_a p\to \Diamond_a p$ are not in $\mathsf{KB5}$. However, we only need to add $\Box_a\Diamond_b (p \lor \lnot p)$ to the antecedents. Specifically, note that the formula $(\Diamond_b(p \lor \lnot p) \land \Box_b q) \to q$ is in $\mathsf{KB5}$. Hence:
\begin{itemize}
    \item  $\Diamond_a(\Diamond_b(p \lor \lnot p) \land \Box_b\Box_a p) \to p \in \mathsf{KB5}|_{alt} \setminus \mathsf{S4}|_{alt}$;
    \item $\Diamond_a(\Diamond_b(p \lor \lnot p) \land \Box_b\Diamond_a p) \to \Diamond_a p \in \mathsf{KB5}|_{alt} \setminus \mathsf{B}|_{alt}$.
\end{itemize}
Their derivations in $\mathsf{KB5}$ are in the same spirit as above, and $\mathcal{M}$ and $\mathcal{M}'$ can be reused.
\end{proof}

\section{Agent-nonrepeating formulas}\label{sec:nonrepeating}
The above non-collapsing results raise a natural question: is there a smaller fragment defined in the same spirit that also collapses $\mathsf{S5}$ to $\mathsf{T}$? Recall that the non-collapsing results are witnessed by formulas like $\Diamond_a\Box_b\Diamond_a p\to \Diamond_a p$. When $\Box_b$ is factive, agent $a$ is \textit{ipso facto} introspecting since we can eliminate $\Box_b$ by $\mathsf{T}$. In this section we identify a fragment of \textit{agent-nonrepeating formulas} in which this cannot happen and $\mathsf{S5}$ does collapse to $\mathsf{T}$. The key idea is that we need to forbid $\Box_a$ to appear at all in the scope of $\Box_a$. Again, to formalize this idea, we provide an occurrence-based definition and an inductive definition. 

\begin{definition}\normalfont{
A formula $\phi \in \mathcal{L}$ is an \textit{agent-nonrepeating formula} iff for any $a \in A$ and $\Box_a$ occurrence $O[\Box_a \alpha]$, there is no other $\Box_a$ occurrence $O[\Box_a \beta]$ such that $O[\Box_a \beta] \leq O[\Box_a \alpha]$}
\end{definition}
\begin{definition}\normalfont{
Define a family $\{\mathcal{L}_{X}\}_{X \subseteq A}$ of fragments of $\mathcal{L}$ through the following simultaneous induction:
\[
\mathcal{L}_{X} \ni \varphi ::= p \mid \Box_x\psi \mid \lnot \varphi \mid (\varphi \land \varphi)
\]
where $p \in \mathsf{Prop}$ and $x \in X$ while $\psi \in \mathcal{L}_{X \setminus \{x\}}$.} 
\end{definition}

The following equivalence is easily verified.
\begin{proposition}
For any $\varphi \in \mathcal{L}$, $\varphi \in \mathcal{L}_{A}$ iff $\varphi$ is agent nonrepeating.
\end{proposition}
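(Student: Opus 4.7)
The plan is to prove both directions by a single structural induction on $\varphi$, but strengthening the statement so it carries enough information through the $\Box_x$-case. Write $\mathrm{Ag}(\varphi) := \{a \in A : \Box_a\beta \preccurlyeq \varphi \text{ for some } \beta \in \mathcal{L}\}$ for the set of agents whose modality occurs somewhere in $\varphi$. The strengthened claim I would actually prove is: for every $\varphi \in \mathcal{L}$ and every $X \subseteq A$ with $\mathrm{Ag}(\varphi) \subseteq X$, $\varphi$ is agent-nonrepeating iff $\varphi \in \mathcal{L}_X$. Taking $X = A$ then yields the Proposition, since $\mathrm{Ag}(\varphi) \subseteq A$ trivially.

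For the $(\Leftarrow)$ direction ($\varphi \in \mathcal{L}_X \Rightarrow$ agent-nonrepeating), I would induct simultaneously on the definition of the family $\{\mathcal{L}_X\}_{X \subseteq A}$, also maintaining $\mathrm{Ag}(\varphi) \subseteq X$ in the IH. The propositional case $\varphi = p$ is trivial. In the boolean cases $\varphi = \lnot\psi$ and $\varphi = \psi_1 \land \psi_2$ the $\Box_a$-occurrences of $\varphi$ are precisely those of the immediate subformulas with $\varphi$ appended at the end, and $\Box_a$-occurrences coming from different conjuncts are $\leq$-incomparable, so agent-nonrepetition lifts from the subformulas via the IH. The only informative case is $\varphi = \Box_x \psi$ with $\psi \in \mathcal{L}_{X \setminus \{x\}}$: the IH gives $\mathrm{Ag}(\psi) \subseteq X \setminus \{x\}$, so $\Box_x$ does not occur in $\psi$ at all, making $\langle \varphi \rangle$ the \emph{unique} $\Box_x$-occurrence of $\varphi$; for every $a \neq x$, the $\Box_a$-occurrences of $\varphi$ are in bijection with those of $\psi$, so agent-nonrepetition for $a$ transfers from $\psi$ to $\varphi$.

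For the $(\Rightarrow)$ direction (agent-nonrepeating and $\mathrm{Ag}(\varphi) \subseteq X \Rightarrow \varphi \in \mathcal{L}_X$), I would induct on $\varphi$. The propositional and boolean cases are immediate, because agent-nonrepetition is clearly inherited by immediate subformulas (a nested $\Box_a$-pair in a subformula lifts to a nested $\Box_a$-pair in $\varphi$). In the case $\varphi = \Box_x \psi$ we have $x \in \mathrm{Ag}(\varphi) \subseteq X$, so $x$ is a legal choice for the outermost modality in the clause for $\mathcal{L}_X$. To invoke the IH on $\psi$ against $X \setminus \{x\}$, the key observation is that $x \notin \mathrm{Ag}(\psi)$: any $\Box_x$-occurrence inside $\psi$ would, after appending the final entry $\varphi$, yield a second $\Box_x$-occurrence of $\varphi$ strictly $\leq$-below $\langle \varphi \rangle$, contradicting agent-nonrepetition of $\varphi$. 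Hence $\mathrm{Ag}(\psi) \subseteq \mathrm{Ag}(\varphi) \setminus \{x\} \subseteq X \setminus \{x\}$, and $\psi$ is itself agent-nonrepeating, so the IH yields $\psi \in \mathcal{L}_{X \setminus \{x\}}$, whence $\varphi = \Box_x \psi \in \mathcal{L}_X$.

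The main obstacle is noticing that the statement does not induct well as given: in the $\Box_x$-case, the definition of $\mathcal{L}_A$ requires $\psi \in \mathcal{L}_{A \setminus \{x\}}$, which is strictly stronger than the naive IH $\psi \in \mathcal{L}_A$. Parameterising over all admissible $X$ and simultaneously bounding $\mathrm{Ag}(\varphi)$ by $X$ is precisely what makes both inductions go through. Once this setup is in place, each case reduces to a routine check on parsing trees using the bijection between the $\Box_a$-occurrence types of $\varphi$ and those of its immediate subformulas.
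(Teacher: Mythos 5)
Your proof is correct, and it takes the approach the paper itself gestures at but omits (the authors only remark that the equivalence is ``easily verified'' by examining parsing trees): a structural induction matching $\Box_a$-occurrence types of $\varphi$ with those of its immediate subformulas. Your one substantive addition --- strengthening the claim to ``$\varphi\in\mathcal{L}_X$ iff $\varphi$ is agent-nonrepeating, for every $X\supseteq\mathrm{Ag}(\varphi)$'' so that the $\Box_x\psi$ case can invoke the hypothesis at $X\setminus\{x\}$ --- is exactly the adjustment needed to make the induction go through, and the key observations (that $x\notin\mathrm{Ag}(\psi)$ in both directions, and that occurrences from distinct conjuncts are $\le$-incomparable) are all in place.
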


As before, we need a notion of bisimulation appropriate for the fragment.
\begin{definition}\normalfont{
An \textit{agent-nonrepeating bisimulation family} between two models $\mathcal{M}$ and $\mathcal{N}$ is a family of binary relations ${\{\bisim_{X}\}_{X \subseteq A}}$ between $\mathcal{M}$ and $\mathcal{N}$ such that for every $X \subseteq A$ and every $u \in \mathcal{M}$ and $v \in \mathcal{N}$ such that $u \bisim_X v$:
\begin{itemize}
    \item (Atom) for all $p \in \mathsf{Prop}$, $u \in V^{\mathcal{M}}(p)$ iff $v \in V^{\mathcal{N}}(p)$;
    \item (Zig) for all $x \in X$ and  $u' \in R_x^{\mathcal{M}}(u)$, there is
$v' \in R_x^{\mathcal{N}}(v)$ such that $u' \bisim_{X \setminus \{x\}} v'$;
    \item (Zag) for all $x \in X$ and  $v' \in R_x^{\mathcal{N}}(v)$, there is $u' \in R_x^{\mathcal{M}}(u)$ such that $u' \bisim_{X \setminus \{x\}} v'$. 
\end{itemize}
Then we say $\mathcal{M}, u$ is \textit{agent-nonrepeating bisimilar to $\mathcal{N}, v$} if there is an agent-nonrepeating bisimulation family $\{\bisim_{X}\}_{X \subseteq A}$ between $\mathcal{M}$ and $\mathcal{N}$ such that $u \bisim_{A} v$.} 
\end{definition}

\begin{lemma}
\label{lem:nr-invariance}
For any models $\mathcal{M}$ and $\mathcal{N}$,  agent-nonrepeating bisimulation family $\{\bisim_{X}\}_{X \subseteq A}$ between $\mathcal{M}$ and $\mathcal{N}$, and $X \subseteq A$, if $u \bisim_{X} v$, then $\mathcal{M}, u \equiv_{\mathcal{L}_{X}} \mathcal{N}, v$. Hence whenever $\mathcal{M}, u$ is agent-nonrepeating bisimilar to $\mathcal{N}, v$, we have $\mathcal{M}, u \equiv_{L_A} \mathcal{N}, v$.
\end{lemma}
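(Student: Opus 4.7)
The plan is to prove the strengthened statement by a simultaneous induction on the structure of formulas: for every $X \subseteq A$, every $\varphi \in \mathcal{L}_X$, and every pair with $u \bisim_X v$, we have $\mathcal{M}, u \models \varphi$ iff $\mathcal{N}, v \models \varphi$. The second claim of the lemma follows immediately by specializing to $X = A$. This is exactly the shape of argument used for Lemma \ref{lem:bisim-inva}, only with the parameter $X \subseteq A$ in place of a single ``excluded agent'' $a$, so I would model the proof directly on that one.

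The base case $\varphi = p$ is discharged by the (Atom) clause. The Boolean cases reduce the goal to immediate subformulas, which remain in the same $\mathcal{L}_X$ by the inductive definition of $\mathcal{L}_X$, so the induction hypothesis applies with the same subset $X$. The only interesting case is $\varphi = \Box_x \psi$. Here the inductive definition of $\mathcal{L}_X$ forces $x \in X$ and $\psi \in \mathcal{L}_{X \setminus \{x\}}$. Assuming $\mathcal{M}, u \models \Box_x \psi$, I take an arbitrary $v' \in R_x^{\mathcal{N}}(v)$ and invoke (Zag) at $u \bisim_X v$ with $x \in X$ to obtain some $u' \in R_x^{\mathcal{M}}(u)$ with $u' \bisim_{X \setminus \{x\}} v'$. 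Since $\mathcal{M}, u' \models \psi$ and $\psi \in \mathcal{L}_{X \setminus \{x\}}$, the induction hypothesis yields $\mathcal{N}, v' \models \psi$. The converse direction uses (Zig) symmetrically.

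The only subtlety, and really the organizing idea of the argument, is that the bisimulation index must shrink in lockstep with the set of agents allowed by $\mathcal{L}_X$: after traversing an $R_x$ edge we lose the right to use $x$ again, but this is precisely the restriction built into both $\mathcal{L}_{X \setminus \{x\}}$ and $\bisim_{X \setminus \{x\}}$. Because of this alignment, no real obstacle arises; unlike in Propositions \ref{prop:collapse-4-5} and \ref{prop:collapse-B}, no unraveling or model construction is needed, since the lemma merely transports truth across an already-given bisimulation family. The entire proof can therefore be stated as ``a straightforward induction on $\varphi$,'' parallel to the one-line justification given for Lemma \ref{lem:bisim-inva}.
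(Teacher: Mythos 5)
Your proof is correct and is exactly the argument the paper intends: the paper leaves this lemma without an explicit proof (its analogue, Lemma \ref{lem:bisim-inva}, is justified only as ``a simple induction''), and your simultaneous induction on formula structure, with the index set $X$ shrinking in lockstep with the agents permitted by $\mathcal{L}_X$ at each modal step, is precisely that routine induction carried out in full.
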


For any logic $\mathsf{L}$, we write $\mathsf{L}|_{nr}$ for $\mathsf{L} \cap \mathcal{L}_{A}$. We can now prove the desired collapse result.
\begin{theorem}
\label{thm:collapsing-S5}
For every reflexive pointed model $\mathcal{M}, w$, there is a partition model $\mathcal{N}, w'$ such that $\mathcal{M}, w$ is agent-nonrepeating bisimilar to $\mathcal{N}, w'$. Consequently, $\mathsf{T}|_{nr} = \mathsf{S5}|_{nr}$. 
\end{theorem}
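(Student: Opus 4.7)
My plan is to build an agent-nonrepeating unraveling of $\mathcal{M}$ at $w$ whose branches use each agent at most once (hence have depth at most $|A|$), and then extend the unraveled edges into equivalence relations by a local recipe that still supports an agent-nonrepeating bisimulation. Let $\mathcal{N}$ have as worlds the finite nonempty sequences $s = \langle (\star, w), (a_2, w_2), \ldots, (a_k, w_k) \rangle$ where $\star$ is a root marker, $a_2, \ldots, a_k$ are pairwise distinct elements of $A$, and $w_{i-1} R_{a_i}^{\mathcal{M}} w_i$ for $i \geq 2$. Write $w_s := w_k$, $X_s := A \setminus \{a_2, \ldots, a_k\}$, and set $s \in V^{\mathcal{N}}(p)$ iff $w_s \in V^{\mathcal{M}}(p)$.

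For each $a \in A$, define $R_a^{\mathcal{N}}$ to be the equivalence relation whose nontrivial classes are the sets
\[
C_a(t) \;=\; \{t\} \cup \{t \cdot (a, w') : w_t R_a^{\mathcal{M}} w'\}
\]
indexed by nodes $t$ with $a \in X_t$, with every remaining node forming a singleton class. A short bookkeeping check shows these pieces partition the worlds of $\mathcal{N}$, because any $t$ with $a \in X_t$ cannot itself be an immediate $a$-child of any node; hence each $R_a^{\mathcal{N}}$ is a genuine equivalence relation and $\mathcal{N}$ is a partition model.

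Now let $u \bisim_X v$ mean $u = w_v$ and $X \subseteq X_v$. The Atom clause is immediate. For Zig, given $u \bisim_X v$, $x \in X$, and $u' \in R_x^{\mathcal{M}}(u)$, take $v' := v \cdot (x, u')$: this is a valid node (since $x \in X \subseteq X_v$) lying in $C_x(v)$, and $X_{v'} = X_v \setminus \{x\} \supseteq X \setminus \{x\}$. For Zag, given $v' \in R_x^{\mathcal{N}}(v)$ with $x \in X$, the fact that $x \in X_v$ forces $v' \in C_x(v)$, so either $v' = v$ (take $u' := u$, using reflexivity of $R_x^{\mathcal{M}}$) or $v' = v \cdot (x, w')$ (take $u' := w'$). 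Starting from $w' := \langle (\star, w) \rangle$ gives $w \bisim_A w'$, and the collapse $\mathsf{T}|_{nr} = \mathsf{S5}|_{nr}$ follows by contrapositive via Lemma \ref{lem:nr-invariance}: a reflexive countermodel to $\varphi \in \mathsf{S5}|_{nr}$ would yield a partition countermodel, contradicting $\varphi \in \mathsf{S5}$.

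The main obstacle is finding the right recipe for the equivalence relations. It has to do two things at once: make each $R_a^{\mathcal{N}}$ a bona fide equivalence relation on \emph{all} of $\mathcal{N}$, while still producing exactly the right successors at those nodes $v$ where the bisimulation actually tries to step via $x$---precisely the nodes with $x \in X_v$. The ad hoc singleton classes at deep $a$-descendants achieve the former without interfering with the latter, because the bisimulation never traverses $R_x^{\mathcal{N}}$ from such a node. Reflexivity of $\mathcal{M}$ is used exactly once, in the Zag case $v' = v$, which is why the same construction would not work starting from an arbitrary (possibly non-reflexive) $\mathcal{M}$.
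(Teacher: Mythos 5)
Your proposal is correct and follows essentially the same route as the paper's proof: an unraveling whose states track the set of agents not yet used, equivalence relations obtained by grouping each eligible node with its immediate $a$-children (equivalently, the paper's kernel of the map $s \mapsto s_{<a}$), the bisimulation $u \bisim_X v$ requiring $X \subseteq X_v$, and reflexivity of $\mathcal{M}$ invoked exactly in the Zag subcase $v' = v$. The only differences are notational (you record the list of used agents where the paper records the shrinking subsets of $A$ directly).
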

\begin{proof}
Let $\mathcal{M}$ be a reflexive model. We construct $\mathcal{N} = \langle S, \{R_a\}_{a \in A}, V \rangle$. Let $S$ be the set of all nonempty finite sequences $s$ of pairs in ${\wp(A) \times W^{\mathcal{M}}}$ such that, letting $s = \langle \langle X_i, w_i \rangle \rangle_{i = 1 \ldots \len(s)}$, (1) $X_1 = A$, and (2) for all $i = 1 \ldots \len(s)-1$ and $X_{i+1} \subsetneq X_i$, there is $a\in A$ such that $X_i = X_{i+1} \cup \{a\}$ and $w_iR^{\mathcal{M}}_{a}w_{i+1}$.

To make the rest of the construction easier, we make a few auxiliary definitions. For each $s \in S$, define $LastA(s)$ to be $* \not\in A$ when $\len(s) = 1$ and otherwise the $a \in X \setminus X_0$ with $\langle X, u\rangle = s_{\len(s)}$ and $\langle X_0, u_0\rangle$ when $s_{\len(s) - 1}$. Intuitively, $LastA(s)$ denotes the last accessibility relation used in the sequence $s$.  It is easy to observe from the definition above that for any $s \in S$ with $\langle X, u\rangle = s_{\len(s)}$, $LastA(s) \not\in X$, and moreover when $\len(s) > 1$, $s_{\len(s)-1} = {\langle X \cup LastA(s), u_0\rangle}$ for a $u_0 \in \mathcal{M}$ such that $u_0 R_{LastA(s)}^{\mathcal{M}} u$.

Then for any $a \in A$ and $s \in S$, define $s_{<a}$ to be $s$ if $a \not= LastA(s)$ and otherwise $s_{1 \ldots \len(s)-1}$. Intuitively this is the $a$-predecessor of $s$ in $S$. Now $R_a$ is defined for each $a \in A$ by the condition that $s R_a t$ iff $s_{<a} = t_{<a}$ for all $s, t \in S$. With this definition, it is not hard to compute $R_a(s)$ specifically. For all $s \in S$ such that $\len(s) > 1$, letting $s_{\len(s)} = \langle X, u\rangle$, $s_{\len(s)-1} = \langle X_0, u_0 \rangle$ and $s_0 = s_{1 \ldots \len(s)-1}$, we have the following.
\begin{itemize}
    \item For all $a \in X$, $R_a(s) = \{s + \langle X \setminus \{a\}, v \rangle \mid v \in R_a^{\mathcal{M}}(u)\} \cup \{s\}$.
    \item For the $a \in X_0 \setminus X$ (namely $LastA(s)$), $R_a(s) = \{s_0\} \cup {\{s_0 + \langle X, u' \rangle \mid u_0 R_a^{\mathcal{M}} u' \}}$. 
    \item For $a \in A \setminus X_0$, $R_a(s) = \{s\}$. 
\end{itemize}
For all $s \in S$ such that $\len(s) = 1$, in which case $s = \langle A, u\rangle$ for some $u \in W^{\mathcal{M}}$, we have that $R_a(s) = \{ s + \langle A \setminus \{a\}, v \rangle \mid v \in R_a^{\mathcal{M}}(u)\} \cup \{s\}$. 

The valuation $V$ is defined as usual. For every $s \in S$ and $p \in \mathsf{Prop}$, $s \in V(p)$ iff $s_{\len(s)} \in \wp(A) \times V^{\mathcal{M}}(p)$. That is, $s \in V(p)$ iff the second coordinate of $s_{\len(s)}$ is in $V^{\mathcal{M}}(p)$.

Then there is a natural family $\{\bisim_X\}_{X \in \wp(A)}$ of relations between $\mathcal{M}$ and $\mathcal{N}$ defined by \[ u \bisim_{X} s \iff s_{\len(s)} = \langle Y, u\rangle \textrm{ with } X \subseteq Y.\]

Now we are left with two tasks: to show that $\mathcal{N}$ is a partition model and to show that $\{\bisim_X\}_{X\in\wp(A)}$ is an agent-nonrepeating bisimulation family between $\mathcal{M}$ and $\mathcal{N}$. That $\mathcal{N}$ is a partition model is clear: for any $a \in A$, we defined $R_a$ by an equality condition. Now we show that $\{\bisim_{X}\}_{X \in \wp(A)}$ is an agent-nonrepeating bisimulation family. 
Pick arbitrary $u \in \mathcal{M}, s \in S$, and $X \in \wp(A)$ such that $u \bisim_X s$. By definition, $s_{\len(s)} = \langle Y, u\rangle$ for some $Y \supseteq X$. The (Atom) clause is trivial. For the (Zig) clause, pick an arbitrary $a \in X$ and $v \in R_a^{\mathcal{M}}(u)$. Then we see that $s + \langle Y \setminus \{a\}, v\rangle$ witnesses the requirement, as $s + \langle Y\setminus \{a\}, v\rangle \in R_a(s)$ and $v \bisim_{X \setminus \{a\}} s + \langle Y \setminus \{a\}, v\rangle$ because from $X \subseteq Y$ we have $X \setminus \{a\} \subseteq Y \setminus \{a\}$. For the (Zag) clause, we need to use the fact that $\mathcal{M}$ is reflexive. Picking an arbitrary $a \in X$ and $t \in R_a(s)$, we know that $a \in Y$ and hence there are two cases for $t$:
\begin{itemize}
    \item $t = s$. Then $u$ itself witnesses the requirement, as $u \bisim_{X \setminus \{a\}} s$ and $u R_a^{\mathcal{M}} u$. 
    \item $t = s + \langle Y \setminus \{a\}, v\rangle $ for some $v \in \mathcal{M}$ such that $u R_a^{\mathcal{M}} v$. Then clearly $v$ is witnesses the requirement.
\end{itemize}

In summary, $\mathcal{N}, \langle \langle A, w\rangle\rangle$ is a  pointed partition model, and $\mathcal{M}, w$ is agent-nonrepeating bisimilar to it. Hence we are done. 
\end{proof}

With the help of the above theorem, we obtain the poset of logics in Figure \ref{fig:systems-in-nr} when restricted to $\mathcal{L}_A$. 
\begin{figure}[t!]
\centering
\begin{tikzpicture}[shorten >=1pt,node distance=3cm,on grid,auto, scale=0.6] 
   \node[draw,scale=0.75,align=left] (k) 
   {
   $\mathsf{K}|_{nr}$ \\ 
   $(=\mathsf{K4}|_{nr},
    \mathsf{K5}|_{nr},$ \\
    $\mathsf{K45}|_{nr},
    \mathsf{KB}|_{nr})$
   };
   \node[draw,scale=0.75,align=left, above right=of k] (kd) 
   {
   $\mathsf{KD}|_{nr}$ \\ 
   $(=\mathsf{KD4}|_{nr},
    \mathsf{KD5}|_{nr},$ \\
    $
    \mathsf{KD45}|_{nr},
    \mathsf{KDB}|_{nr})$
   };
   \node[draw,scale=0.75,align=left, below right=of kd] (t) 
   {
   $\mathsf{T}|_{nr}$ \\ 
   $(=\mathsf{S4}|_{nr},
    \mathsf{B}|_{nr},$ \\
    $\mathsf{S5}|_{nr}.$
   };
   \node[draw,scale=0.6, below right=of k]  (kb5) {$\mathsf{KB5}|_{nr}$};
   \path[->] 
	(k) edge node {} (kd)
	(k) edge node {} (kb5)
	(kd) edge node {} (t)
	(kb5) edge node {} (t);
\end{tikzpicture}
\caption{Systems when restricted to $\mathcal{L}_{A}$}
\label{fig:systems-in-nr}
\end{figure}
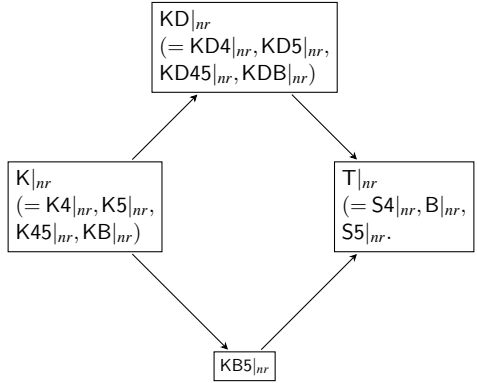

\newpage
\begin{theorem}
Among the systems displayed in Figure \ref{fig:systems}:
\begin{enumerate}
    \item $\mathsf{K}|_{nr} = \mathsf{K4}|_{nr} = \mathsf{K5}|_{nr} = \mathsf{K45}|_{nr} = \mathsf{KB}|_{nr}$;
    \item $\mathsf{KD}|_{nr} = \mathsf{KD4}|_{nr} = \mathsf{KD5}|_{nr} = \mathsf{KD45}|_{nr} = \mathsf{KDB}|_{nr}$;
    \item $\mathsf{T}|_{nr} = \mathsf{S4}|_{nr} = \mathsf{B}|_{nr} = \mathsf{S5}|_{nr}$;
    \item no other collapse happens when restricting to $\mathcal{L}_{A}$. 
\end{enumerate}
The results are summarized in Figure \ref{fig:systems-in-nr}.
\end{theorem}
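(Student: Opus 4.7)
The plan is to derive the collapses (1)--(3) from the earlier theorems and then exhibit a few agent-nonrepeating witnesses for the non-collapse claim (4). The key observation for (1) and (2) is that $\mathcal{L}_A \subseteq \mathcal{L}_{alt}$: a formula with no two nested $\Box_a$-occurrences vacuously satisfies the agent-alternating requirement. Hence $\mathsf{L}|_{nr} = \mathsf{L}|_{alt} \cap \mathcal{L}_A$ for every normal modal logic $\mathsf{L}$, so (1) and (2) follow directly from the corresponding parts of Theorem~\ref{thm:collapsing-in-alt}. For (3), Theorem~\ref{thm:collapsing-S5} gives $\mathsf{T}|_{nr} = \mathsf{S5}|_{nr}$; combined with the syntactic chains $\mathsf{T} \subseteq \mathsf{S4} \subseteq \mathsf{S5}$ and $\mathsf{T} \subseteq \mathsf{B} \subseteq \mathsf{S5}$, this sandwiches $\mathsf{S4}|_{nr}$ and $\mathsf{B}|_{nr}$ between equal endpoints.

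For the non-collapse part (4), I need to distinguish the four remaining classes $\mathsf{K}|_{nr}, \mathsf{KD}|_{nr}, \mathsf{KB5}|_{nr}, \mathsf{T}|_{nr}$ pairwise. Two shallow witnesses take care of most pairs: $\Diamond_a \top \in \mathsf{KD}|_{nr}$ is refuted by the single-world frame with every $R_a$ empty (which is trivially a $\mathsf{KB5}$-frame), separating $\mathsf{KD}|_{nr}$ from both $\mathsf{K}|_{nr}$ and $\mathsf{KB5}|_{nr}$; and the $\mathsf{T}$-axiom $\Box_a p \to p$ visibly separates $\mathsf{T}|_{nr}$ from the other three classes via obvious irreflexive counter-models. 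The delicate remaining separation is $\mathsf{KB5}|_{nr} \not\subseteq \mathsf{KD}|_{nr}$, for which I propose the witness
\[
\phi := \Box_a p \to (p \vee \Box_a \bot).
\]
Its two $\Box_a$-occurrences are sibling, not nested, so $\phi \in \mathcal{L}_A$. It is a $\mathsf{KB5}$-theorem: if $w R_a u$, then symmetry gives $u R_a w$, and applying Euclideanness to $u R_a w$ paired with itself gives $w R_a w$; hence either $R_a(w) = \emptyset$ and $\Box_a \bot$ holds at $w$, or $w R_a w$ and $\Box_a p \to p$ holds at $w$. On the other hand the serial model with worlds $w, w'$, edges $w \to w'$ and $w' \to w'$ for every agent, and $V(p) = \{w'\}$, refutes $\phi$ at $w$. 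Thus $\phi$ witnesses $\mathsf{KB5}|_{nr} \setminus \mathsf{KD}|_{nr}$, which also handles $\mathsf{KB5}|_{nr} \setminus \mathsf{K}|_{nr}$ and completes the pairwise-distinctness check.

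The main obstacle is simply finding the single-agent formula $\phi$: the usual axioms witnessing that $\mathsf{KB5}$ strictly extends $\mathsf{K}$ or $\mathsf{KD}$ --- namely $\mathsf{4}$, $\mathsf{5}$, and $\mathsf{B}$ --- all involve nested $\Box_a$-occurrences and therefore lie outside $\mathcal{L}_A$. The insight is that on $\mathsf{KB5}$-frames reflexivity is ``conditional on having a successor,'' and this conditional version of $\mathsf{T}$ can be written with two non-nested $\Box_a$-occurrences as $\phi$.
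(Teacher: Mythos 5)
Your proof is correct and takes essentially the same route as the paper: parts (1)--(2) via $\mathcal{L}_A \subseteq \mathcal{L}_{alt}$ and Theorem~\ref{thm:collapsing-in-alt}, part (3) via Theorem~\ref{thm:collapsing-S5}, and for (4) the only delicate separation is $\mathsf{KB5}|_{nr} \not\subseteq \mathsf{KD}|_{nr}$, for which your witness $\Box_a p \to (p \vee \Box_a\bot)$ is propositionally equivalent to the paper's $(\Diamond_a(p \vee \lnot p) \wedge \Box_a p) \to p$ and rests on the same ``conditional reflexivity'' of $\mathsf{KB5}$-frames. Your write-up is somewhat more explicit than the paper's about the remaining pairwise separations, but there is no substantive difference.
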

\begin{proof}
Since $\mathcal{L}_{A} \subseteq \mathcal{L}_{alt}$, all collapsing results in Theorem \ref{thm:collapsing-in-alt} obtain. This covers (1) and (2). Due to Theorem \ref{thm:collapsing-S5}, we have (3). Clearly $\mathsf{KB5}|_{nr} \subseteq \mathsf{S5}|_{nr}$ since the $\mathsf{T}$ axiom is in $\mathcal{L}_{A}$. Hence we are left to show that $\mathsf{KB5}|_{nr}$ is not in $\mathsf{KD}|_{nr}$. The witness is simply $(\Diamond_a (p \lor \lnot p) \land \Box_a p) \to p$.
\end{proof}

\section{Allowing the standard common belief operator?}

\label{sec:commonbelief}
Given its importance in many applications, it is natural to consider adding the standard common belief operator to $\mathcal{L}_{alt}$ and investigate the resulting collapse of logics. In this section, we provide three non-collapsing results for the axioms $\mathsf{4}$ and $\mathsf{5}$, and leave a full investigation with possible collapsing results for future work. Given that $Cp$ expresses a potentially infinitary conjunction of formulas where modalities are compounded in arbitrary order, implicitly $Cp$ is not agent alternating: formulas like $\Box_a\Box_a p$ are part of the definition of $Cp$. Hence it is not surprising that we get many non-collapsing results. Moreover, we face the problem of whether to allow $C$ to be in the scope of or scope over any $\Box_a$ or itself. Again the reason is that if we expand $C\Box_ap$ or $CCp$ syntactically as infinitary formulas, $\Box_a$ will scope over an occurrence of $\Box_a$ immediately. Hence it is not obvious what is the most appropriate definition of an agent-alternating fragment in a language with a common belief operator, and a full investigation would require a hierarchy of fragments, each allowing more interactions between $C$ and other modalities or $C$ itself. Our non-collapsing results about $\mathsf{4}$ also crucially rely on $A$ being finite. We conjecture that the collapsing situation would change radically when $A$ is infinite.

Now let us fix the language and semantics for the common belief operator. 
\begin{definition}\normalfont{
 Let $\mathcal{C}$ be defined by adding new clauses $C\varphi$ and $E\varphi$ to $\mathcal{L}$'s context-free grammars. Semantically, $\mathcal{M}, u \models E\varphi$ iff for all $v \in \mathcal{M}$ such that $u (\bigcup_{a \in A}R^{\mathcal{M}}_a) v$, $\mathcal{M}, v \models \varphi$, and $\mathcal{M}, u \models C\varphi$ iff for all $v \in \mathcal{M}$ such that $u (\bigcup_{a \in A}R^{\mathcal{M}}_a)^+ v$, $\mathcal{M}, v \models \varphi$, where $(\bigcup_{a \in A}R^{\mathcal{M}}_a)^+$ means the transitive closure of the union of relations in $\{R^{\mathcal{M}}_a\}_{a \in A}$. Hence $E\varphi$ formalizes ``everyone believes $\varphi$,'' and $C\varphi$ formalizes ``it is commonly believed that $\varphi$.'' }
\end{definition}
Our logics must expand as well, as we need to add the axioms and rules for the $C$ and $E$ operators. To avoid choosing particular axiomatizations, we define logics directly as validities. For any $\mathsf{L} \subseteq \mathcal{L}$, let $\mathsf{CL}$ denote the set of formulas in $\mathcal{C}$ that are valid on all frames that validates $\mathsf{L}$. For particular axiomatizations, see \cite{Halpern2004infinite}. For our purposes, it is enough to note that for any $\mathsf{L}$, the followings formulas are in $\mathsf{CL}$.
\begin{align*}
    &(Cp \land C(p \to q)) \to Cq && (Ep \land E(p \to q)) \to Eq \\
    &(C(p \to Ep) \land Ep) \to Cp && Ep \to \Box_a p \\
    &Cp \to E(p \land Cp) && \bigwedge_{a \in A} \Box_a p \to Ep \quad (\textrm{when $A$ is finite}).
\end{align*}
Then we can identify at least two $E$-free fragments: one in which $C$ is not allowed to interact with $\Box_a$ but allowed to interact with $C$, and one in which $C$ can appear arbitrarily.
\newcommand{\LC}{\mathcal{L}_{alt}\mathcal{C}^p}
\newcommand{\calt}{\mathcal{C}_{alt}}
\begin{definition}\normalfont{
  Let $\mathcal{C}^p$ be the fragment of formulas in $\mathcal{C}$ with $C$ the only appearing modality. Then let $\mathcal{L}_{alt}\mathcal{C}^p$ be the fragment consisting of Boolean combinations of formulas in $\mathcal{L}_{alt}$ and $\mathcal{C}^p$.}
\end{definition}
\begin{definition}\normalfont{
  Let $\mathcal{C}_{alt}$ and $\mathcal{C}_{-a}$ for any $a \in A$ be defined by adding a new clause $C\varphi$ to $\mathcal{L}_{alt}$ and $\mathcal{L}_{-a}$'s context-free grammars.} 
\end{definition}
For example, $\Box_aC\varphi$ is in $\mathcal{C}_{alt}$ but not in $\LC$. Note that $\LC \subseteq \calt$. Hence for non-collapsing results, using $\LC$ would be stronger. 

Now we present the non-collapsing results. The situation with the $\mathsf{5}$ axiom is relatively simple. Even in the smaller fragment $\LC$ and even with the $\mathsf{D}$ axiom, $\mathsf{5}$ is still important. 
\begin{proposition}
$\mathsf{CK5}\cap\LC$ is not contained in $\mathsf{CKD}\cap\LC$. The formula $\chi_{\mathsf{5}} = (\Diamond_a p \land \Diamond_a \lnot p) \to \widehat{C}(p \land \widehat{C} \lnot p)$ is the witness.
\end{proposition}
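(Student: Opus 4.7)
The plan is to establish three things about the witness $\chi_{\mathsf{5}} = (\Diamond_a p \land \lnot\Box_a p) \to \widehat{C}(p \land \widehat{C}\lnot p)$: (i) $\chi_{\mathsf{5}} \in \LC$; (ii) $\chi_{\mathsf{5}}$ is valid on every frame whose accessibility relations are all Euclidean, so $\chi_{\mathsf{5}} \in \mathsf{CK5}$; and (iii) $\chi_{\mathsf{5}}$ has a pointed serial countermodel, so $\chi_{\mathsf{5}} \notin \mathsf{CKD}$. Claim (i) is immediate by syntactic inspection: $\Diamond_a p \land \Diamond_a \lnot p \in \mathcal{L}_{alt}$ and $\widehat{C}(p \land \widehat{C}\lnot p) \in \mathcal{C}^p$ (since $\mathcal{C}^p$ permits $C$ to nest under itself), so $\chi_{\mathsf{5}}$ is a Boolean combination of formulas in $\mathcal{L}_{alt} \cup \mathcal{C}^p$, hence in $\LC$.

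\textbf{Validity on Euclidean frames.} Suppose $\mathcal{M}$ has each $R_b$ Euclidean and $\mathcal{M}, w \models \Diamond_a p \land \Diamond_a \lnot p$. Pick $u,v \in W^{\mathcal{M}}$ with $w R_a u$, $u \models p$ and $w R_a v$, $v \models \lnot p$. Euclideanness of $R_a$ gives $u R_a v$. Since $R_a \subseteq \bigcup_{b \in A} R_b$, a single $R_a$-step is a single step of the union, so $w$ reaches $u$ and $u$ reaches $v$ each in one step of $(\bigcup_b R_b)^+$. From $v \models \lnot p$ we get $u \models \widehat{C}\lnot p$, hence $u \models p \land \widehat{C}\lnot p$, hence $w \models \widehat{C}(p \land \widehat{C}\lnot p)$.

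\textbf{Serial countermodel.} I would take $\mathcal{M} = \langle \{w, u, v\}, \{R_b\}_{b \in A}, V \rangle$ with $R_a = \{(w, u), (w, v), (u, u), (v, v)\}$, $R_b = \{(x, x) \mid x \in \{w, u, v\}\}$ for each $b \in A \setminus \{a\}$, and $V(p) = \{u\}$. Every $R_b$ is serial, so $\mathcal{M}$ is a $\mathsf{KD}$ model. At $w$, $\Diamond_a p$ holds via $u$ and $\Diamond_a \lnot p$ via $v$, so the antecedent of $\chi_{\mathsf{5}}$ holds. The set reachable from $w$ via $(\bigcup_b R_b)^+$ is exactly $\{w, u, v\}$; from $u$, the reachable set under $(\bigcup_b R_b)^+$ is just $\{u\}$ (the only outgoing edges from $u$ are self-loops). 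Since $u \models p$, $\widehat{C}\lnot p$ fails at $u$; and $p$ fails at $w$ and at $v$. Hence no world reachable from $w$ satisfies $p \land \widehat{C}\lnot p$, so $\widehat{C}(p \land \widehat{C}\lnot p)$ fails at $w$, refuting $\chi_{\mathsf{5}}$.

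\textbf{Main obstacle.} The real care goes into step (iii): $u$ must already satisfy $p$, and its forward-reachable set under $(\bigcup_b R_b)^+$ must contain no $\lnot p$-world, otherwise $\widehat{C}\lnot p$ would rescue the conclusion. The construction above achieves this by routing every successor of $u$ (and symmetrically of $v$) to $u$ itself (resp.\ $v$ itself) under every $R_b$, and using self-loops on the other agents to secure seriality without creating unwanted long-range reachability. All the other steps are essentially routine frame-theoretic verifications.
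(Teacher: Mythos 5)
Your proof is correct and follows essentially the same route as the paper's: the paper simply exhibits a three-world countermodel and leaves both the syntactic membership in $\mathcal{L}_{alt}\mathcal{C}^p$ and the validity of $\chi_{\mathsf{5}}$ on Euclidean frames implicit, whereas you spell out all three steps, using an equivalent fork-shaped serial countermodel and the same Euclideanness argument ($wR_au$, $wR_av$ implies $uR_av$, so the $p$-successor can ``see'' the $\lnot p$-successor). The extra detail is welcome---in particular your explicit seriality check, since the paper's drawn model appears to omit a self-loop at one world.
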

\begin{proof}
Clearly the following model proves the claim. All accessibility relations are the same, so we are not labeling the arrows. 
\begin{center}
    \begin{tikzpicture}
        \node[world] (w1) [label=above:$w_2$] {};
        \node[world] (w2) [right=of w1,label=above:$w_1$] {};
        \node[world] (w3) [right=of w2,label=above:$w_3$, fill=gray, label=right:$p$] {};
        \node (M') [left=5mm of w1] {$\mathcal{M}$};
        
        \path (w1) edge[<-] (w2);
        \path (w2) edge[->] (w3);
        \path (w3) edge[loop below]  (w3);
        \path (w2) edge[loop below]  (w2);
    \end{tikzpicture}\vspace{-.3in}
\end{center}
\end{proof}

For the $\mathsf{4}$ axiom we provide two non-collapsing results. First, in $\LC$, $\mathsf{CK4}$ does not collapse to $\mathsf{CK}$ when $A$ is finite. 
\begin{proposition}
$\mathsf{CK4}\cap\LC$ is not contained in $\mathsf{CK}\cap\LC$. The witness is the formula $\chi_\mathsf{4}=(\bigwedge_{x \in A \setminus \{a\}}(\Box_x \bot \land \Box_a\Box_x\bot) \land \Box_a p) \to Cp$.
\end{proposition}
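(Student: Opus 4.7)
The plan is to show validity in $\mathsf{CK4}$ and refutability in $\mathsf{CK}$ separately; that $\chi_\mathsf{4}$ lies in $\LC$ is immediate from the definitions, since each conjunct $\Box_x\bot$, $\Box_a\Box_x\bot$ (with $x \neq a$), and $\Box_a p$ is agent-alternating, and $Cp$ is in $\mathcal{C}^p$.

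For validity in $\mathsf{CK4}$, I would fix a model $\mathcal{M}$ in which every $R_a^{\mathcal{M}}$ is transitive and a world $w$ satisfying the antecedent of $\chi_\mathsf{4}$. The heart of the argument is to identify the set $S := R_a^{\mathcal{M}}(w)$ as an absorbing set for $(\bigcup_{b \in A} R_b^{\mathcal{M}})^+$-reachability from $w$. Three facts are needed: (i) every $v \in S$ satisfies $p$, directly from $\Box_a p$ at $w$; (ii) every $v \in S$ has no $R_x^{\mathcal{M}}$-successor for any $x \in A \setminus \{a\}$, directly from $\Box_a \Box_x \bot$ at $w$; and (iii) $S$ is closed under $R_a^{\mathcal{M}}$, which is precisely the $\mathsf{4}$ axiom. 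A short induction on path length then shows that every $u$ with $w\,(\bigcup_b R_b^{\mathcal{M}})^+\,u$ lies in $S$: the first step out of $w$ must be via $R_a^{\mathcal{M}}$ because the conjuncts $\Box_x\bot$ at $w$ eliminate every other agent's move, landing in $S$; and from any $v \in S$ any further step must again be via $R_a^{\mathcal{M}}$ by (ii), remaining in $S$ by (iii). Combined with (i), this yields $\mathcal{M}, w \models Cp$.

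For non-validity in $\mathsf{CK}$, I would exhibit a short $a$-chain countermodel. Take $W = \{w_0, w_1, w_2\}$, $R_a = \{(w_0, w_1), (w_1, w_2)\}$, $R_x = \emptyset$ for all $x \in A \setminus \{a\}$, and $V(p) = \{w_1\}$. At $w_0$ each $\Box_x\bot$ is vacuous, each $\Box_a \Box_x \bot$ holds since $w_1$ has no non-$a$ successors, and $\Box_a p$ holds because $w_1$ is the unique $a$-successor of $w_0$ and lies in $V(p)$. But $w_0\,(R_a \circ R_a)\,w_2$ while $w_2 \notin V(p)$, so $Cp$ fails at $w_0$. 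The failure of $R_a$-transitivity is what permits this, exactly at the place where the $\mathsf{CK4}$ argument used closure of $S$.

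I do not anticipate a serious obstacle. The one care point is the induction in the validity direction: (ii) bounds the behaviour of worlds in $S$, but without (iii) the ``no non-$a$ successor'' constraint would not propagate to $R_a^{\mathcal{M}}$-iterates, which is exactly the gap exploited by $w_2$ in the countermodel. The hypothesis that $A$ is finite, flagged in the surrounding discussion, is what makes $\bigwedge_{x \in A \setminus \{a\}}$ a legitimate formula in the first place.
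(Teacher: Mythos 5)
Your proposal is correct and takes essentially the same approach as the paper: the paper proves this proposition only by reference to the proof of Proposition 5.7 ($\chi_{\mathsf{D4}}$), whose semantic half is exactly your case analysis of how a world can be reached from $w$ (with transitivity keeping all reachable worlds inside $R_a^{\mathcal{M}}(w)$), and your three-world $a$-chain is the direct analogue of the paper's non-transitive countermodel. The only cosmetic difference is that the paper's template also records a syntactic derivation via the $C$-introduction axiom, which you replace by the (equally valid) purely semantic argument.
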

\begin{proof}
The idea is essentially the same as the proof of the next proposition, Proposition \ref{prop:ckd4-separation}.
\end{proof}

The formula in the previous proposition does not separate $\mathsf{CKD4}$ from $\mathsf{CKD}$, as it is trivially valid in $\mathsf{CKD}$ for the reason that $\Box_b\bot$ is inconsistent. Here we provide a formula not in $\LC$ but in $\calt$ that separates $\mathsf{CKD4}$ from $\mathsf{CKD}$, again assuming that $A$ is finite. 
\begin{proposition}
\label{prop:ckd4-separation}
$\mathsf{CKD4}\cap\calt$ is not contained in $\mathsf{CKD}\cap\calt$. The witness is the following formula $\chi_{\mathsf{D4}}$:
\[\left(\bigwedge_{b \in A \setminus \{a\}}(\Box_b p \land C\Box_b p \land \Box_b\Box_a p \land C\Box_b\Box_a p) \land \Box_a p\right) \to C p.\]
\end{proposition}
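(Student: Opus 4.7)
First, $\chi_{\mathsf{D4}}$ belongs to $\calt$: each of $\Box_a p$, $\Box_b p$, and $\Box_b\Box_a p$ is agent alternating, each $C\Box_b p$ and $C\Box_b\Box_a p$ prepends $C$ to an $\mathcal{L}_{alt}$ formula (allowed by the grammar of $\calt$), $Cp$ is immediate, and the whole implication is a Boolean combination of these, so $\chi_{\mathsf{D4}}\in\calt$. The substantive content is then to show (a) $\chi_{\mathsf{D4}}$ is valid on every serial transitive frame (so lies in $\mathsf{CKD4}$) and (b) $\chi_{\mathsf{D4}}$ is refutable on some serial frame (so lies outside $\mathsf{CKD}$).

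For (a), fix a serial transitive model $\mathcal{M}$ and a world $u$ satisfying the antecedent. For any $v$ with $u R^+ v$, pick a path $u=u_0,u_1,\ldots,u_n=v$ with $n\geq 1$, where step $i$ uses $R_{c_i}$. We case-split on the tail of the path. If $c_n=b\neq a$, then $\Box_b p$ holds at $u_{n-1}$---directly from the antecedent when $n=1$, or via $C\Box_b p$ at $u$ when $n\geq 2$---so $p$ holds at $v$. If every $c_i=a$, transitivity of $R_a$ (the content of $\mathsf{4}$) collapses the path to $u R_a v$, and $\Box_a p$ at $u$ yields $p$ at $v$. The remaining case is the crux: $c_n=a$ but some earlier $c_i\neq a$. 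Let $k$ be the largest index with $c_k=b\neq a$; then $k\leq n-1$, $u_{k-1} R_b u_k$, and transitivity of $R_a$ collapses the trailing $a$-chain $u_k R_a u_{k+1}\cdots R_a v$ to $u_k R_a v$. Since $\Box_b\Box_a p$ holds at $u_{k-1}$ (from the antecedent if $k=1$, from $C\Box_b\Box_a p$ at $u$ otherwise), $\Box_a p$ holds at $u_k$, hence $p$ at $v$.

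For (b), we exhibit a small serial Kripke model. Take $W=\{u,v,w,v',w'\}$ with $p$ false only at $w$, let $R_a=\{(u,v),(v,w),(w,w'),(v',v'),(w',w')\}$, and for each $b\in A\setminus\{a\}$ let $R_b=\{(u,v'),(v,v'),(v',v'),(w,w'),(w',w')\}$. Seriality is immediate, and the $R^+$-successors of $u$ are $v,v',w,w'$; since $\neg p$ holds at $w$, we have $u\not\models Cp$. By direct inspection, at every reachable world the unique $R_b$-successor lies in the ``sink'' $\{v',w'\}$, where $p$ is true and $R_a$ is a self-loop; hence $\Box_b p$ and $\Box_b\Box_a p$ hold at every reachable world, giving $C\Box_b p$ and $C\Box_b\Box_a p$ at $u$. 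Combined with $\Box_a p$, $\Box_b p$, and $\Box_b\Box_a p$ at $u$ itself (verified similarly), the antecedent holds at $u$ while $Cp$ fails.

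The main obstacle is the bookkeeping in (a): one must apply $\mathsf{4}$ exactly to collapse the trailing $R_a$-chain, and invoke the $C$-formulas only when the witness world $u_{n-1}$ or $u_{k-1}$ lies strictly in the transitive closure from $u$, handling the edge cases $n=1$ and $k=1$ via the antecedent directly. Everything else is straightforward propagation.
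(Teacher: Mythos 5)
Your proof is correct and follows essentially the same route as the paper's: validity on serial transitive frames is established by the same case analysis on the shape of the path from $u$ to a reachable world $v$ (collapsing trailing $R_a$-chains by transitivity and invoking the antecedent or its $C$-prefixed conjuncts depending on whether the relevant intermediate world is $u$ itself), and non-membership in $\mathsf{CKD}$ is witnessed by an explicit serial countermodel of the same flavor, in which $b$-steps land in a $p$-sink while an $a$-chain escapes to a $\lnot p$-world. Your concrete countermodel differs from the paper's four-world model and you omit the paper's parallel syntactic derivation via the $C$-introduction axiom, but since the paper defines $\mathsf{CKD4}$ semantically as the set of validities over serial transitive frames, your purely semantic argument suffices.
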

\begin{proof}
Clearly $\chi_{\mathsf{D4}}$ is in $\mathcal{C}_{alt}$. To see that it is in $\mathsf{CK4}$, recall that the introduction axiom for common belief is 
\[ \left(C(\varphi \to E\varphi) \land E\Box_a \varphi\right) \to C\varphi. \]
Note that $E \Box_a p$ is derivable from the antecedent of $\chi_{\mathsf{D4}}$, as $\Box_b\Box_a p$ for any $b \in A \setminus \{a\}$ is already in the antecedent, and $\Box_a\Box_a p$ follows from $\Box_a p$ by $\mathsf{4}$. Hence we only need $C(p \to Ep)$. It is enough to show $CEp$ and in fact $C\Box_ap$, as for every $b \in A \setminus \{a\}$, $C\Box_b p$ is already in the antecedent of $\chi_{\mathsf{D4}}$. By the $C$-intro axiom again, we only need to derive $E\Box_a p$ and $C(\Box_a p \to E\Box_a p)$.  We already dealt with $E\Box_a p$, so we are left with $C(\Box_a p \to E\Box_a p)$. For any $b \in A \setminus \{a\}$, $C(\Box_a p \to \Box_b\Box_a p)$ follows from $C\Box_b\Box_a p$, which is already in the antecedent of $\chi_{\mathsf{D4}}$. For the case of $C(\Box_a p \to \Box_a \Box_a p)$, we only need to necessitate $\mathsf{4}$. 

Semantically, consider an arbitrary transitive model and a world $u$ in the model. For any $v$ that is reachable from $u$, there are only the following cases.
\begin{itemize}
    \item Only $R_a$ is used. Then $\Box_a p$ being true at $u$ is enough to make $p$ true at $v$, by transitivity.
    \item Only $R_b$ is used for $b \in A \setminus \{a\}$. Since $\Box_b p$ is true at $u$, similarly $p$ is true at $v$.
    \item The last step is in $R_a$, and the last non-$a$ step is $R_b$. Then depending on if there is a step before the last $R_b$ step, $\Box_b\Box_a p$ or $C\Box_b\Box_a p$ being true at $u$  guarantees $p$'s being true at $v$.
    \item The last step is in $R_b$ for some $b\in A \setminus \{a\}$, and $R_b$ is not the only relation used. Then $C\Box_b p$ being true at $u$ guarantees that $p$ is true at v.
\end{itemize}
Hence $p$ is true at $v$ no matter how $v$ is reached from $u$. Thus, $Cp$ is true at $u$. 

Now to see that $\chi_{\mathsf{D4}}$ is not in $\mathsf{CKD}$, consider the following model. 
\begin{center}
    \begin{tikzpicture}
        \node[world] (w1) [label=above:$w_1$, label=below:$p$, fill=gray] {};
        \node[world] (w2) [right=of w1,label=above:$w_2$, label=below left:$p$, fill=gray] {};
        \node[world] (w3) [right=of w2,label=above:$w_3$] {};
        \node[world] (v2) [below=of w2,label=left:$v_1$, label=below:$p$, fill=gray] {};
        \node (M') [left=5mm of w1] {$\mathcal{M}$};
        
        \path (w1) edge[->] node[anchor=center, fill=white] {$a$} (w2);
        \path (w2) edge[<->] node[anchor=center, fill=white] {$a$} (w3);
        \path (w2) edge[->, loop,looseness=7,in=30,out=60] node{$a$} (w2);
        \path (w3) edge[loop right] node{$a$} (w3);
        \path (w3) edge[->] node[anchor=center, fill=white]{$b$} (v2);
        \path (v2) edge[<-] node[anchor=center, fill=white]{$b$} (w1);
        \path (w2) edge[->] node[anchor=center, fill=white]{$b$} (v2);
        \path (v2) edge[loop right] node{$a, b$} (v2);
    \end{tikzpicture}
\end{center}
This model $\mathcal{M}$ has all relations serial. Note that at any world, a $b$ step moves you to $v_1$, which makes $p$ and $Cp$ true. Note that $\Box_a p$ is also true at $w_1$. Hence the antecedent is true at $w_1$. But clearly $Cp$ is false at $w_1$, as $w_3$ is reachable but $p$ is false at $w_3$. Thus,  $\chi_{\mathsf{D4}}$ is not in $\mathsf{CKD}$.
\end{proof}

\section{Discussion}\label{sec:discussion}
In the introduction, we suggested that $\mathcal{L}_{alt}$ is sufficient to formalize agents' multi-agent epistemic reasoning in many cases, especially in games. As shown in Appendix~\ref{sec:expressivity}, this claim is substantial if we do not assume both introspection axioms $\mathsf{4}$ and $\mathsf{5}$, for then there is a loss of expressivity in moving from $\mathcal{L}$ to $\mathcal{L}_{alt}$. There remains the question of how widely it is true that $\mathcal{L}_{alt}$ is  sufficient to formalize multi-agent epistemic reasoning. In concrete games, it may well be that there is a brute fact $\varphi$ that is not expressible in $\mathcal{L}_{alt}$, yet for agents to do well in this game, they must reason about $\varphi$. For example, when twins are playing games, there seems to be motivation for them to introspect and reason about themselves. A formal study of this question would complement our work and contribute to answering the question of to what extent introspection axioms matter for multi-agent epistemic reasoning.

In \S~\ref{sec:nonrepeating}, we identified one fragment, the fragment of agent-nonrepeating formulas, with respect to which $\mathsf{S5}$ collapses to $\mathsf{T}$. It is not too hard to see that the expressivity of this fragment is extremely poor. For example, there is a bound on the modal depth of the formulas in this fragment when $A$ is finite. It remains an open question whether there is an expressively more satisfying fragment with a natural syntactic definition that can collapse $\mathsf{S5}$ to $\mathsf{T}$. 

In \S~\ref{sec:commonbelief}, we noted that a full investigation of which fragments of $\mathcal{C}$ collapse which logics is left for future research. In particular, there are two obvious questions. First, when we are separating $\mathsf{CKD4}$ from $\mathsf{CKD}$, the formula we used is in $\calt$ but not in $\LC$. The question here is whether $\LC$ in fact collapses $\mathsf{CLD4}$ to $\mathsf{CKD}$. Second, we did not consider the case where $A$ is infinite. We conjecture that when $A$ is infinite, $\LC$ and perhaps even $\calt$ has the same collapsing power as $\mathcal{L}_{alt}$ does.

\newcommand{\XCY}{\phantom{}^XC^Y}
The main reason for the complexity of the problem with $\mathcal{C}$ is that $C$ is implicitly not agent alternating and hence our unraveling technique does not apply directly. This motivates the formulation of an agent-alternating common belief operator. Indeed, we need many versions of agent-alternating common belief. For any subsets $X$ and $Y$ of $A$, we can define an operator $\phantom{}^XC^Y$ such that $\XCY p$ means that for any nonempty agent alternating finite sequence $l$ of elements in $A$ such that $l_1 \in X$ and $l_{len(l)}\in Y$, $l_1$ believes that $l_2$ believes that $\cdots$ $l_{len(l)}$ believes that $p$. The $X$ and $Y$ are here to make sure that $\XCY$ immediately scopes over and is immediately in the scope of the right modalities. For example, $\Box_a \XCY \Box_b p$ would be agent alternating iff $a \not\in X$ and $b \not\in Y$. It is not hard to see that the techniques in \S~\ref{sec:alt-collapse} are enough to deal with these operators, through a translation to infinitary languages allowing infinite conjunctions, as our agent-alternating bisimulation families preserves truth values of even infinitary formulas. 

Finally, our project can be naturally extended to any extension of the multi-agent doxastic/epistemic language. Natural candidates include languages with dynamic operators, probability operators, or non-standard knowledge operators. The central question to ask in each case is this: what would be a natural agent-alternating fragment or a fragment sufficient for the intended application of those languages, and how does restricting to this fragment affect the landscape of logics? We believe that this type of question will generate interesting results and deepen our understanding of the realm of epistemic logics. 

\bibliographystyle{eptcs}
\bibliography{references}

\appendix
\section{Rationalizability as agent-alternating common belief of rationality}
\label{sec:alternating-rationality}
In this appendix, we sketch a proof that rationality plus agent-alternating common belief of rationality, in which one need not believe that oneself is rational, is enough for all agents to play their rationalizable strategies. This is already implicit in one of the very first definitions of rationalizability by Bernheim \cite{bernheim1984rationalizable}, where he carefully stipulated that agents cannot formulate ``conjectures'' about themselves in systems of beliefs that rationalize their actions. We make this more explicit by using modal logic and Kripke models of games in the style of \cite{stalnaker1994evaluation} and \cite{bonanno2002modal}.

We utilize typical notation in game theory in this appendix. Let $G = \langle \{S_a\}_{a \in A}, \{U_a\}_{a \in A}\rangle$ be a strategic form game: for all $a \in A$, $S_a$ is a finite set and $U_a$ is a function from $\Pi_{a \in A} S_a$ to $\mathbb{R}$, which then naturally extends to $\Delta \Pi_{a \in A} S_a$.

Following \cite{bonanno2002modal}, we first pick a set of distinct proposition letters $\{r_a \mid a \in A\} \subseteq \mathsf{Prop}$. Then a model $\mathcal{M}$ of $G$ is a tuple $\langle W, \{R_a\}_{a \in A}, \{P_a\}_{a \in A}, \{\sigma_a\}_{a \in A}, V\rangle$ satisfying the following properties.
\begin{itemize}
    \item $W$ is a finite set.
    \item For any $a \in A$, $R_a$ is serial binary relation on $W$. 
    \item For any $a \in A$, $P_a$ is a function from $W$ to probability distributions on $W$ such that for any $w \in W$, $P_{a, w}(R_a(w)) = 1$. 
    \item For any $a \in A$, $\sigma_a$ is a function from $W$ to $S_a$.
    \item $V$ is a function from $\mathsf{Prop}$ to $\wp(W)$. 
    \item For any $a \in A$, $w \in V(r_a)$ iff $\sigma_a(w)$ is a best response to what $a$ believes her opponents play. Formally, this condition is
\[   
\forall s_a \in S_a, \sum_{w' \in R_a(w)} P_{a, w}(w') U_a(\sigma_a(w), \sigma_{-a}(w')) \ge \sum_{w' \in R_a(w)} P_{a, w}(w')U_a(s_a, \sigma_{-a}(w')).
\]
\end{itemize}

To formulate agent-alternating common belief of rationality, for each nonempty finite sequence $l$ of elements in $A$, let $\rho_s$ be the formula $\Box_{l_1}\Box_{l_2}\cdots\Box_{l_{len(l)-1}}r_{l_{len(l)}}$. For example, $\rho_{\langle a\rangle} = r_a$ and $\rho_{\langle a, b, a\rangle} = \Box_a\Box_br_a$. Then let $\Gamma$ be the set of $\rho_l$ such that $l$ is agent-alternating: for all $i = 1 \ldots len(l)-1$, $l_i \not= l_{i+1}$. This $\Gamma$ then encodes agent-alternating common belief of rationality. 

With these definitions, it easily follows that if $\mathcal{M}, w \models \Gamma$, then for each $a\in A$, $\sigma_a(w)$ is a strategy that survives the iterated elimination of strictly dominated strategies. To show this, we can simply adapt the proof in \cite{stalnaker1994evaluation}. For each $a \in A$, collect $\sigma_a(w')$ in $Q_a$ for each $w' \in W$ such that there is an agent-alternating path from $w$ to $w'$ with the last move not using $R_a$. Then it is not hard to see that for any $a \in A$ and $q_a \in Q_a$, $q_a$ is not strictly dominated by any strategy in $\Delta Q_a$ assuming any opponent $b \in A \setminus \{a\}$ only plays strategies in $Q_b$. Indeed, given that $q_a$ is in $Q_a$, by definition there is $w' \in W$ and $w_1, w_2, \ldots, w_n$ and $a_1, a_2, \cdots, a_n \not= a$ such that $w_1 = w$, $w_n = w'$, $a_n \not= a$, and for all $i = 1 \ldots n-1$, $w_iR_{i}w_{i+1}$ and $a_i \not= a_{i+1}$. Then for each $w'' \in R_a(w')$, $\sigma_{-a}(w'') \in Q_{-a}$ as $w''$ is reachable from $w$ using $a_1, a_2, \cdots, a_n, a$ which is still an alternating sequence. Note also that $\mathcal{M}, w \models \Gamma$ and in particular $\mathcal{M}, w \models \Box_{a_1}\Box_{a_2} \cdots\Box_{a_{n-1}}r_{a_n}$. Hence $\mathcal{M}, w' \models r_a$, and $q_a$ is the best response to the mixture $m_{-a}$ of $\langle \sigma_{-a}(w'') \mid w'' \in R_a(w')\rangle$ using $P_{a, w'}$, which is a mixture of $Q_{-a}$ since each $\sigma_{-a}(w'') \in Q_{-a}$. Thus $q_a$ is not strictly dominated by any mixture $m_a$ of $Q_a$: if $m_a$ strictly dominates $q_a$ on any $q_{-a} \in Q_{-a}$, then $m_a$ strictly dominates $q_a$ on $m_{-a}$, and then $q_a$ is not the best response to $m_{-a}$, a contradiction. Thus each $Q_a$ survives each stage of elimination. Now for each $a \in A$, $\sigma_a(w) \in Q_a$, since we can use the trivially agent-alternating path $w_1 = w$ and $a_1 = b \in A \setminus \{a\}$ (recall that $|A| > 1$). Hence the strategy $\sigma_q(w)$ played at $w$ is a strategy that survives iterated elimination of strictly dominated strategies. 

We may also express agents' belief that their opponents' actions are independent by proposition letters. Then agent-alternating common belief that agents believe that their opponents' actions are independent can be expressed using modal formulas. To this end, first fix another set $\{r'_a \mid a \in A\} \subseteq \mathsf{Prop}$ of distinct proposition letters such that $\{r'_a \mid a \in A\} \cap \{r_a \mid a \in A\} = \varnothing$. The intended interpretation of $r'_a$ is that $a$ believes that her opponents' actions are independent. Hence we require that for any $a \in A$, $w \in V(r_a')$ iff $P_{a, w}(\{w' \in R_a(w) \mid \sigma_{-a}(w') = s_{-a}\}) = \Pi_{b \in A \setminus \{a\}}P_{a, w}(\{w' \in R_a(w) \mid \sigma_{b}(w') = s_{-a}(b)\})$ for any $s_{-a} \in S_{-a}$. Then similar to the definition of $\rho_{l}$, for any alternating sequence $l$ of elements in $A$, we define $\rho'_{l}$ with the trailing proposition letter being $r'_{l_{len(l)}}$, and we let $\Gamma'$ be the set of all such $\rho'_{l}$'s. With this setup, it is not hard to see, using the same strategy as above, that if $\mathcal{M}, w \models \Gamma \cup \Gamma'$, then for any $a \in A$, $\sigma_a(w)$ is a rationalizable strategy for $a$. 

\section{Expressivity of \texorpdfstring{$\mathcal{L}_{alt}$}{Lalt}}
\label{sec:expressivity}
In this appendix we study the influence of the introspection axioms on the expressivity of $\mathcal{L}_{alt}$. We first define finite agent-alternating bisimulations so we can give a more quantitative analysis of expressivity. 

\begin{definition}\normalfont{
  By induction on $n$, let binary relations $\{\bisim_{-a}^n\}_{a \in A}$ be defined on all pointed models as follows:
  \begin{itemize}
  \item $\mathcal{M}, u \bisim_{-a}^0 \mathcal{N}, v$ iff  $V^{\mathcal{M}}(u) = V^{\mathcal{N}}(v)$.
  \item $\mathcal{M}, u \bisim_{-a}^{n+1} \mathcal{N}, v$ iff
    \begin{itemize}
    \item (Atom) $V^\mathcal{M}(u) = V^{\mathcal{M}}(v)$ and
    \item $\forall b \in A \setminus \{a\}$
      \begin{itemize}
      \item ($b$-zig) $\forall x \in R^{\mathcal{M}}_b (u) \;\exists y \in R^{\mathcal{N}}_b
        (v) \;\mathcal{M}, x \bisim_{-b}^n \mathcal{N}, y$ and 
      \item ($b$-zag) $\forall y \in R^{\mathcal{N}}_b (v)\; \exists x \in
        R^{\mathcal{M}}_b (u)\; \mathcal{M}, x \bisim_{-b}^n \mathcal{N}, y$.
      \end{itemize}
    \end{itemize}
  \end{itemize}
  Then let $\bisim_{alt}^0$ be defined in the same way as $\bisim_{-a}^0$ for any $a \in A$, and define $\bisim_{alt}^{n+1}$ by
    \begin{itemize}
    \item (Atom) $V^\mathcal{M}(u) = V^{\mathcal{M}}(v)$ and
    \item $\forall b \in A$
      \begin{itemize}
      \item ($b$-zig) $\forall x \in R^{\mathcal{M}}_b (u) \;\exists y \in R^{\mathcal{N}}_b
        (v) \;\mathcal{M}, x \bisim_{-b}^n \mathcal{N}, y$ and 
      \item ($b$-zag) $\forall y \in R^{\mathcal{N}}_b (v) \;\exists x \in
        R^{\mathcal{M}}_b (u) \;\mathcal{M}, x \bisim_{-b}^n \mathcal{N}, y$.
      \end{itemize}
    \end{itemize}
  Then for $x \in A \cup \{alt\}$, let $\bisim_{x}^\omega$ be the
  intersection of $\{\bisim_{x}^n\}_{n \in \NN}$.}
\end{definition}

\begin{proposition}
  For all $n \in \NN$ and $a \in A$, $\bisim_{-a}^n$ is an equivalence relation.
  Consequently, $\bisim_{alt}^n$ is an equivalence relation for all $n \in \NN$.
\end{proposition}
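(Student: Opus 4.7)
The plan is to prove by induction on $n$, simultaneously for every $a \in A$, that $\bisim_{-a}^n$ is reflexive, symmetric, and transitive; then the argument for $\bisim_{alt}^n$ will be essentially the same and will consume the $\bisim_{-b}^n$ result as a lemma. For the base case $n=0$, the relation $\bisim_{-a}^0$ is simply equality of the sets of propositional letters true at the distinguished points, and this is trivially an equivalence relation.

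For the inductive step, fix $a \in A$ and assume that $\bisim_{-b}^n$ is an equivalence relation for every $b \in A$. Reflexivity of $\bisim_{-a}^{n+1}$ is immediate: for any pointed model $\mathcal{M}, u$, the atom clause holds trivially, and for each $b \in A \setminus \{a\}$ and each $x \in R^{\mathcal{M}}_b(u)$, taking $y := x$ witnesses ($b$-zig) since by the inductive hypothesis $\mathcal{M}, x \bisim_{-b}^n \mathcal{M}, x$; ($b$-zag) is dual. For symmetry, if $\mathcal{M}, u \bisim_{-a}^{n+1} \mathcal{N}, v$, the witnesses produced by ($b$-zig) from $\mathcal{M}$ to $\mathcal{N}$ serve as witnesses for ($b$-zag) from $\mathcal{N}$ to $\mathcal{M}$, and the symmetry of $\bisim_{-b}^n$ at level $n$ (from the inductive hypothesis) ensures that the matched pairs stay related after swapping. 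For transitivity, suppose $\mathcal{M}_1, u_1 \bisim_{-a}^{n+1} \mathcal{M}_2, u_2$ and $\mathcal{M}_2, u_2 \bisim_{-a}^{n+1} \mathcal{M}_3, u_3$. Given $b \in A \setminus \{a\}$ and $x_1 \in R^{\mathcal{M}_1}_b(u_1)$, two successive applications of ($b$-zig) yield $x_2 \in R^{\mathcal{M}_2}_b(u_2)$ with $\mathcal{M}_1, x_1 \bisim_{-b}^n \mathcal{M}_2, x_2$ and then $x_3 \in R^{\mathcal{M}_3}_b(u_3)$ with $\mathcal{M}_2, x_2 \bisim_{-b}^n \mathcal{M}_3, x_3$; transitivity of $\bisim_{-b}^n$ (inductive hypothesis) gives $\mathcal{M}_1, x_1 \bisim_{-b}^n \mathcal{M}_3, x_3$, which is what ($b$-zig) demands. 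The ($b$-zag) clause is dual, and atom transitivity is clear.

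Once the family $\{\bisim_{-a}^n\}_{a \in A, n \in \NN}$ is known to consist of equivalence relations, the corresponding claim for $\bisim_{alt}^n$ follows by a single induction on $n$ using exactly the same three arguments, the only difference being that the quantification is over all $b \in A$ rather than $b \in A \setminus \{a\}$; the witnesses at level $n$ are again drawn from $\bisim_{-b}^n$, which we have already verified to be an equivalence relation for every $b$.

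The main obstacle, such as it is, is purely a matter of bookkeeping: one must be careful that $\bisim_{-a}^{n+1}$ is defined in terms of $\bisim_{-b}^n$ for $b \neq a$ (not for $b = a$), so the induction must be stated uniformly in $a$ and the inductive hypothesis must be invoked for the appropriate $b$ at each step. Beyond this, every case reduces to routine structural verification, and no new idea is needed beyond the standard proof that ordinary bisimulation approximations form equivalence relations.
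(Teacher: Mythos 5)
Your proof is correct; the paper states this proposition without proof, evidently regarding it as routine, and your simultaneous induction on $n$ over all $a \in A$ (with the careful observation that $\bisim_{-a}^{n+1}$ refers back to $\bisim_{-b}^{n}$ for $b \neq a$) is exactly the standard verification the paper implicitly relies on. The only cosmetic remark is that once all the $\bisim_{-b}^{n}$ are known to be equivalence relations, the claim for $\bisim_{alt}^{n+1}$ follows directly from its definition without a further induction, since $\bisim_{alt}^{n+1}$ is defined in terms of the $\bisim_{-b}^{n}$ rather than $\bisim_{alt}^{n}$.
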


\begin{theorem}
\label{thm:alt-bisim-invariance}
  For any $n \in \NN$ and pointed models $\mathcal{M}, u$ and
  $\mathcal{N}, v$:
  \begin{itemize}
  \item $\mathcal{M}, u \equiv_{\mathcal{L}_{-a}^n} \mathcal{N}, _v$ if
    $\mathcal{M}, u \bisim_{-a}^n \mathcal{N}, v$;
  \item $\mathcal{M}, u \equiv_{\mathcal{L}_{alt}^n} \mathcal{N}, v$ if
    $\mathcal{M}, u \bisim_{alt}^n \mathcal{N}, v$;
  \item $\mathcal{M}, u \equiv_{\mathcal{L}_{-a}} \mathcal{N}, v$ if
    $\mathcal{M}, u \bisim_{-a}^{\omega} \mathcal{N}, v$;
  \item $\mathcal{M}, u \equiv_{\mathcal{L}_{alt}} \mathcal{N}, v$ if
    $\mathcal{M}, u \bisim_{alt}^{\omega} \mathcal{N}, v$.
  \end{itemize}
\end{theorem}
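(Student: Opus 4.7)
The plan is to prove the first two items (the finite-depth versions) by a simultaneous induction on $n$, and then derive the two $\omega$-level items as immediate corollaries. Here $\mathcal{L}_{-a}^n$ and $\mathcal{L}_{alt}^n$ denote the subfragments of modal depth at most $n$. The base case $n = 0$ is immediate: both $\bisim_{-a}^0$ and $\bisim_{alt}^0$ require the worlds to satisfy the same proposition letters, which suffices for all Boolean combinations of atoms.

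For the inductive step, fix $n$ and assume both statements hold at depth $n$. A formula in $\mathcal{L}_{-a}^{n+1}$ or $\mathcal{L}_{alt}^{n+1}$ is a Boolean combination of atoms and formulas of the form $\Box_b \psi$ with $\psi$ of depth $\le n$, so the Boolean connectives are handled by a trivial subinduction on formula structure at fixed $n+1$, and the only substantive case is $\Box_b \psi$. For the $\bisim_{alt}^{n+1}$ clause: by the grammar of Definition \ref{def:inductive-def}, $\Box_b \psi \in \mathcal{L}_{alt}^{n+1}$ forces $\psi \in \mathcal{L}_{-b}^n$. Assume $\mathcal{M}, u \bisim_{alt}^{n+1} \mathcal{N}, v$ and $\mathcal{M}, u \models \Box_b \psi$; to show $\mathcal{N}, v \models \Box_b \psi$, pick any $y \in R_b^{\mathcal{N}}(v)$, use the $b$-zag clause to extract $x \in R_b^{\mathcal{M}}(u)$ with $\mathcal{M}, x \bisim_{-b}^n \mathcal{N}, y$, and apply the first inductive hypothesis to $\psi \in \mathcal{L}_{-b}^n$; the converse direction uses $b$-zig. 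The $\bisim_{-a}^{n+1}$ case is analogous, with the additional observation that a $\Box_b$ appearing at the head of a formula in $\mathcal{L}_{-a}^{n+1}$ must have $b \in A \setminus \{a\}$, which is exactly the index range on the zig/zag clauses of $\bisim_{-a}^{n+1}$, so the relevant witness is again in $\bisim_{-b}^n$ and the first inductive hypothesis applies.

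For the two $\omega$-level statements, note that any $\varphi \in \mathcal{L}_{-a}$ (resp.\ $\mathcal{L}_{alt}$) has some finite modal depth $n$, and $\bisim_{-a}^\omega \subseteq \bisim_{-a}^n$ (resp.\ $\bisim_{alt}^\omega \subseteq \bisim_{alt}^n$) directly by definition as the intersection over all $n$, so the finite version immediately yields equivalence on $\varphi$. The main obstacle, and the reason the two finite statements must be proved simultaneously, is precisely the bookkeeping between the syntactic restriction (the index $-a$ or the full $alt$ on the language) and the semantic restriction (the index on the bisimulation): taking a $\Box_b$ step with $b \neq a$ sends a formula in $\mathcal{L}_{-a}$ or $\mathcal{L}_{alt}$ into $\mathcal{L}_{-b}$, and correspondingly the zig/zag clause in the bisimulation produces witnesses in $\bisim_{-b}^n$, so the two sides of the induction feed into each other. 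Once this matching is observed, the proof is otherwise the standard modal-bisimulation invariance argument.
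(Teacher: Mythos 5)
Your proof is correct and is exactly the argument the paper intends: the theorem is stated without proof in the appendix, being the finite-depth refinement of Lemma \ref{lem:bisim-inva}, whose proof the paper describes only as ``a simple induction on modal depth.'' Your simultaneous induction on $n$ over the two statements, with the key bookkeeping that a top-level $\Box_b\psi$ in $\mathcal{L}_{-a}^{n+1}$ or $\mathcal{L}_{alt}^{n+1}$ has $\psi\in\mathcal{L}_{-b}^{n}$ matching the $\bisim_{-b}^{n}$ witnesses supplied by the $b$-zig/zag clauses, is precisely that induction, and the passage to the $\omega$-level via finite modal depth is also as expected.
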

Let $\bisim^n$ denote the usual $n$-bisimulation relation and $\bisim^\omega$
the intersection of $\{\bisim^n\}_{n \in \NN}$. 

Now we can present our main results in this appendix. First, with both introspection axioms, there is no loss of expressivity at each modal depth in moving from $\mathcal{L}$ to $\mathcal{L}_{alt}$.
\begin{proposition}
Letting $\mathbf{K45}$ be the class of pointed models where each $R_a$ is transitive and Euclidean, ${\bisim^n} \cap \mathbf{K45}^2 = {\bisim^{n}_{alt}} \cap \mathbf{K45}^2$ for all $n \in \NN$.
\end{proposition}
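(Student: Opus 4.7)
The inclusion ${\bisim^n} \subseteq {\bisim^n_{alt}}$ is routine and does not use K45. I would prove it by a simultaneous induction on $n$ for the family $\bisim^n \subseteq \bisim^n_{alt}$ and $\bisim^n \subseteq \bisim^n_{-a}$ for all $a \in A$: a standard $n$-bisimulation trivially meets the (weaker) demands of either alt-variant once we know $\bisim^{n-1} \subseteq \bisim^{n-1}_{-c}$ inductively. Along the way I would record the monotonicity $\bisim^n_{-a} \subseteq \bisim^{n-1}_{-a}$, provable by a trivial secondary induction, as this will be used below.

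The substantive direction is $\bisim^n_{alt} \cap \mathbf{K45}^2 \subseteq \bisim^n$, which I would prove by induction on $n$. The engine is the following \emph{cluster-rigidity} lemma for K45: if $w R_a^{\mathcal{M}} w'$ in a K45 model, then $R_a^{\mathcal{M}}(w) = R_a^{\mathcal{M}}(w')$, since transitivity gives $R_a^{\mathcal{M}}(w') \subseteq R_a^{\mathcal{M}}(w)$ and Euclideanness gives the converse. Intuitively, once one steps through an $R_a$-edge the $R_a$-image ``freezes,'' which is precisely what lets an alt-bisimulation (forbidden from taking two consecutive $a$-steps at the top) imitate a full bisimulation.

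The inductive step runs as follows. Given K45 $\mathcal{M}, u$ and $\mathcal{N}, v$ with $\mathcal{M}, u \bisim^{n+1}_{alt} \mathcal{N}, v$, I verify the $b$-zig of $\bisim^{n+1}$ for each $b \in A$: pick $u' \in R_b^{\mathcal{M}}(u)$; the $b$-zig of $\bisim^{n+1}_{alt}$ returns $v' \in R_b^{\mathcal{N}}(v)$ with $\mathcal{M}, u' \bisim^n_{-b} \mathcal{N}, v'$. It suffices to upgrade this to $\mathcal{M}, u' \bisim^n_{alt} \mathcal{N}, v'$, whereupon the IH yields $\mathcal{M}, u' \bisim^n \mathcal{N}, v'$. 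The upgrade requires only the $b$-zig and $b$-zag conditions with $\bisim^{n-1}_{-b}$ at $(u', v')$. For $b$-zig: any $u'' \in R_b^{\mathcal{M}}(u')$ also lies in $R_b^{\mathcal{M}}(u)$ by cluster rigidity, so the $b$-zig of $\bisim^{n+1}_{alt}$ at $(u,v)$ supplies some $v'' \in R_b^{\mathcal{N}}(v) = R_b^{\mathcal{N}}(v')$ with $\mathcal{M}, u'' \bisim^n_{-b} \mathcal{N}, v''$, which is stronger than the required $\bisim^{n-1}_{-b}$ by the recorded monotonicity. The $b$-zag is symmetric.

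The main obstacle is conceptual rather than computational: one has to see that the correct target of the child-level upgrade is $\bisim^n_{alt}$ (so that the IH applies) rather than $\bisim^n$ directly, and that K45 enters only via the cluster-rigidity lemma, which lets the parent's $b$-witnesses double as the child's. Once this is arranged the induction becomes pure bookkeeping, and no assumption on $|A|$ or on model size is needed.
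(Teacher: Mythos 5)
Your proof is correct and follows essentially the same route as the paper's: induction on $n$, with transitivity and Euclideanness supplying the one missing same-agent zig/zag clause when the $\bisim^{n}_{-b}$ witness at the child level is upgraded to $\bisim^{n}_{alt}$ so that the induction hypothesis applies. Your cluster-rigidity packaging of K45 even lets you avoid the paper's separate treatment of the $|A|=1$ case, which it needs only because its bookkeeping routes through an auxiliary agent $a \neq b$.
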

\begin{proof}
If $|A| = 1$, then ${\bisim^1} \cap \mathbf{K45}^2 = {\bisim^2} \cap \mathbf{K45}^2 = {\bisim^n} \cap \mathbf{K45}^2$ for all $n \ge 1$. It is also easy to see that $\bisim_{alt}^1 \,=\, \bisim^1$. Hence ${\bisim_{alt}^1} \cap \mathbf{K45}^2 = {\bisim^{n}} \cap \mathbf{K45}^2$ for all $n$. The required proposition follows immediately. 

Now we assume $|A| > 1$ and prove the claim by induction on $n$. The case for $n = 0$ is trivial. Now suppose ${\bisim^n} \cap \mathbf{K45}^2 = {\bisim^{n}_{alt}} \cap \mathbf{K45}^2$, and let us show that the claim is true for $n+1$. The left-to-right subset relation is trivial. Hence let us pick two arbitrary pointed models $\mathcal{M}, u$ and $\mathcal{N}, v$ in $\mathbf{K45}$ such that $\mathcal{M}, u \bisim_{alt}^{n+1} \mathcal{N}, u$. Now we need to show that $\mathcal{M}, u \bisim^{n+1} \mathcal{N}, v$. That they have the same atomic valuation is trivial. Now pick an arbitrary $b \in A$ and $u'$ such that $uR_b^{\mathcal{M}}u'$ in $\mathcal{M}$. Our goal is then to find a $v'$ such that $vR_b^{\mathcal{N}}v'$ and $\mathcal{M}, u' \bisim^n \mathcal{N}, v'$. Pick some $a \in A \setminus \{b\}$ (note that we assumed that $|A| > 1$) so that $b \in A \setminus \{a\}$. By the definition of $\bisim_{alt}^{n+1}$, $\mathcal{M}, u \bisim_{-a}^{n+1} \mathcal{N}, v$. Then we obtain a $v'$ such that $v R_b^{\mathcal{N}} v'$ and $\mathcal{M}, u' \bisim_{-b}^{n} \mathcal{N}, v'$. Now we show that this $v'$ is what we need: $\mathcal{M}, u' \bisim^n \mathcal{N}, v'$. By the induction hypothesis, it is enough to show that $\mathcal{M}, u' \bisim_{alt}^n \mathcal{N}, v'$. 

Thus, pick an arbitrary $x \in A$. We need to show that $\mathcal{M}, u' \bisim_{-x}^n \mathcal{N}, v'$. The case for the atomic valuation is again trivial. Now we need to show $y$-zig and $y$-zag for all $y \in A \setminus \{x\}$. When $y \not= b$, they are part of the definition of $\bisim_{-b}^n$, which holds between $\mathcal{M}, u'$ and $\mathcal{N}, v'$. Hence we are left with the case where $y = b$. For $b$-zig, pick an arbitrary $u''$ such that $u'R_b^{\mathcal{M}}u''$. Recall that $u R_b^{\mathcal{M}} u'$ and  $R_b$ is transitive. Hence $u R^{\mathcal{M}}_b u''$. Applying $\mathcal{M}, u \bisim_{-a}^{n+1} \mathcal{N}, v$, we obtain $v''$ such that $vR^{\mathcal{N}}_b v''$ and $\mathcal{M}, u'' \bisim_{-b} \mathcal{N}, v''$. But $R^{\mathcal{N}}_b$ is Euclidean and $vR^{\mathcal{N}}_bv'$ too. Hence $vR^{\mathcal{N}}v''$. Thus this $v''$ witnesses the $b$-zig clause for $\mathcal{M}, u' \bisim_{-x}^n \mathcal{N}, v'$. $b$-zag is shown symmetrically, where the transitivity of $R^{\mathcal{N}}_b$ and the Euclideanness of $R^{\mathcal{M}}_b$ are used. The zag clause for $\mathcal{M}, u \bisim^{n+1} \mathcal{N}, v$ is also shown symmetrically. 
\end{proof}

However, if we consider frame classes corresponding to the modal logic cube as in Figure \ref{fig:systems}, having both introspection properties is necessary.

\begin{proposition}
\label{prop:restricted-non-bisimilar-result}
Letting $\mathbf{S4}$ (resp.~$\mathbf{KD5}$, $\mathbf{B}$) be the class of pointed models where each $R_a$ is reflexive and transitive (resp. serial and Euclidean, reflexive and symmetrical), we have  ${\bisim^{\omega}_{alt}} \cap \mathbf{S4}^2 \not\subseteq {\bisim^2} \cap \mathbf{S4}^2 $,   ${\bisim_{alt}^\omega} \cap \mathbf{KD5}^2 \not\subseteq {\bisim^2} \cap \mathbf{KD5}^2$, and ${\bisim_{alt}^\omega} \cap \mathbf{B}^2 \not\subseteq {\bisim^2} \cap \mathbf{B}^2$.
\end{proposition}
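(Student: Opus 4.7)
The plan is to construct, for each of the three frame classes $\mathbf{S4}$, $\mathbf{KD5}$, and $\mathbf{B}$, an explicit pair of pointed models in the class that witnesses the non-containment: the two models are agent-alternating bisimilar (witnessed by a full family $\{\bisim_{x}\}_{x \in A \cup \{alt\}}$) but are distinguished by some $\mathcal{L}$-formula of modal depth at most $2$. By Theorem \ref{thm:alt-bisim-invariance}, any such distinguishing formula must lie outside $\mathcal{L}_{alt}$, so I would specifically search, for each pair, among depth-$2$ formulas of the form $\Box_x \Box_x p$, $\Box_x \Diamond_x p$, $\Diamond_x \Box_x p$, or $\Diamond_x \Diamond_x p$, since these are precisely the non-alternating ways to build a depth-$2$ modal formula.

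For the $\mathbf{S4}$ case, reflexivity and transitivity force $\Box_a \Box_a p \leftrightarrow \Box_a p$ and $\Diamond_a \Diamond_a p \leftrightarrow \Diamond_a p$, so I would aim for a separating formula of the form $\Box_a \Diamond_a p$ or $\Diamond_a \Box_a p$, which are not reducible to depth-$1$ in $\mathbf{S4}$. For the $\mathbf{KD5}$ case, none of the four candidates automatically reduces to depth-$1$, so any of them is a candidate. For the $\mathbf{B}$ case, $\mathbf{B}$ validates $p \to \Box_a \Diamond_a p$ and $\Diamond_a \Box_a p \to p$ but neither converse, so again $\Box_a \Diamond_a p$ or $\Diamond_a \Box_a p$ (applied to a world where $p$ fails) are the natural candidates. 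In each class, I would start from a small pointed model in the class and adjoin a carefully chosen `extra $a$-successor' substructure whose internal shape differs from anything appearing in the base model's $a$-reachable portion, while arranging the atom labels so that they still match along every alt-reachable path, and while closing under the relevant frame axioms (reflexivity, transitivity, symmetry, seriality, Euclideanness) in a way that does not reintroduce edges that would collapse the intended depth-$2$ distinction.

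The main obstacle I expect is verifying agent-alternating bisimilarity rigorously. The zig/zag conditions defining $\bisim_a$ demand $b$-zig/zag for every $b \neq a$; in a class where $R_b$ is forced to be reflexive (as in $\mathbf{S4}$ and $\mathbf{B}$), the trivial self-loop already forces $(u,v) \in \bisim_a \Rightarrow (u,v) \in \bisim_b$, which then requires full $a$-zig/zag from $(u,v) \in \bisim_b$, and this cascade can propagate back through the alt-bisim family to the point where alt-bisimilarity actually collapses to standard bisimilarity. A successful construction therefore needs either non-trivial $R_b$-structure to inject enough slack into this cascade so that the pairs forced into $\bisim_b$ do not in turn force matching of the critical $a$-successors, or else placement of the `extra' $a$-structure at an alt-reachable world whose $R_a$-neighbourhood lies outside the cascade's propagation. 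Once the family $\{\bisim_{x}\}_{x \in A \cup \{alt\}}$ has been verified to satisfy Atom, Zig, and Zag at every pair, invariance of the distinguishing formula under standard $2$-bisimulation yields the desired non-containment.
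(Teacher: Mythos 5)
Your overall strategy coincides with the paper's: for each of $\mathbf{S4}$, $\mathbf{KD5}$, and $\mathbf{B}$, exhibit an explicit pair of pointed models in the class that are related by a full agent-alternating bisimulation family $\{\bisim_x\}_{x\in A\cup\{alt\}}$ yet disagree on a depth-$2$ formula stacking two $a$-modalities. Your preliminary analysis of which separators are even possible is sound, and in one respect sharper than the paper's own proof: you correctly note that over $\mathbf{S4}$ both $\Box_a\Box_a p$ and $\Diamond_a\Diamond_a p$ collapse to their depth-$1$ counterparts, which lie in $\mathcal{L}_{alt}$ and hence cannot separate alt-bisimilar models, so the separator must be $\Box_a\Diamond_a p$ or $\Diamond_a\Box_a p$. (The paper's $\mathbf{S4}$ witness names $\Diamond_a\Diamond_a p$ as the separator; once the omitted reflexive loops are restored, both of its designated points satisfy that formula, so the stated separator fails for exactly the reason you identify---although the paper's two models do disagree on $\Box_a\Diamond_a p$, so its construction is salvageable with your choice of formula.)

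The genuine gap is that you never carry out any construction: no models are given, no bisimulation family is written down, and no verification is performed, yet that is the entire mathematical content of the proposition. You even flag the central difficulty---that reflexivity forces a cascade of zig/zag obligations which threatens to collapse alt-bisimilarity into ordinary bisimilarity---but leave it unresolved (and slightly overstate it: the clauses only demand \emph{some} $\bisim_b$-matching successor, not that the same pair land in $\bisim_b$). For comparison, the paper resolves it with two parallel three-level columns whose levels are linked by the non-$a$ relations: in $\mathcal{M}$ the $a$-arrows descend one column (so $\Box_a\Diamond_a p$ holds at the top), while in $\mathcal{N}$ they cross into the opposite column, whose worlds have no further non-reflexive $a$-successors (so $\Box_a\Diamond_a p$ fails); the family $\bisim_a$ relates all worlds on a level while $\bisim_b$ relates only corresponding columns, which is precisely the ``slack'' you anticipate needing. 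Until you produce and verify such witnesses for all three classes, the proposal is a plan rather than a proof.
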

\begin{proof}
The following two models deal with the $\mathbf{S4}$ case. Reflexive loops are omitted. The dashed arrows represent relations for $a$, and the solid ones represent relations for all agents in $A$ beside $a$. 
\begin{center}
   \begin{tikzpicture}
    \node[world] (w1) [label=left:$l_1$] {};
    \node[world] (w2) [right=of w1, label=right:$r_1$] {};
   	\node[world] (w3) [below=5mm of w1, label=left:$l_2\quad $] {};
    \node[world] (w4) [right=of w3, label=right:$r_2$] {};
    \node[world] (w5) [below=5mm of w3, fill=gray, label=below:$p$, label=left:$l_3$] {};
    \node[world] (w6) [right=of w5, fill=gray, label=below:$p$, label=right:$r_3$] {};
    
    \node (m) [left=7mm of w5] {$\mathcal{M}$};
        
    \path (w1) edge[<->] node {} (w2);
    \path (w3) edge[<->] node {} (w4);
    \path (w5) edge[<->] node {} (w6);
    \path (w1) edge[->, dashed] node {} (w3);
    \path (w3) edge[->, dashed] node {} (w5);
    \path (w1) edge[->, dashed, bend right=50] node [swap] {} (w5);
        
    \node[world] (w7) [right=30mm of w2, label=left:$l'_1$] {};
    \node[world] (w8) [right=of w7, label=right:$r'_1$] {};
    \node[world] (w9) [below=5mm of w7, label=left:$l'_1$] {};
    \node[world] (w10) [right=of w9, label=right:$r'_2$] {};
	\node[world] (w11) [below=5mm of w9, fill=gray, label=below:$p$, label=left:$l'_1$] {};
    \node[world] (w12) [right=of w11, fill=gray, label=below:$p$, label=right:$r'_3$] {};
        
    \node (n) [left=7mm of w11] {$\mathcal{N}$};
        
    \path (w7) edge[<->] node {} (w8);
    \path (w7) edge[->, dashed, bend left] node [swap] {} (w8);
    \path (w9) edge[<->] node {} (w10);
    \path (w9) edge[->, dashed, bend left] node [swap] {} (w10);
    \path (w11) edge[<->] node [swap] {} (w12);
    \path (w11) edge[->, dashed, bend left] node {} (w12);
    \path (w7) edge[->, dashed] node {} (w10);
    \path (w9) edge[->, dashed] node [swap] {} (w12);
    \path (w7) edge[->, dashed] (w12);
    \end{tikzpicture}
\end{center}
Then we have an agent-alternating family $\{ \bisim_{a}\}_{a \in A\cup \{alt\}}$ of bisimulations. For any $i \in \{1, 2, 3\}$ and  $b \in A \setminus \{a\}$:
\begin{itemize}
    \item $l_i \bisim_a l'_i, r'_i$ and $r_i \bisim_a l'_i, r'_i$;
    \item $l_i \bisim_b l'_i$ and $r_i \bisim_b r'_i$.
\end{itemize}
Essentially the nodes on the same level are connected by $\bisim_a$, and the left column in $\mathcal{M}$ is connected to the left column of $\mathcal{N}$ by $\bisim_b$, and similarly the right column in $\mathcal{M}$ is connected to the right column of $\mathcal{N}$ by $\bisim_b$. Finally it enough to just connect $l_1$ with $l'_1$ by $\bisim_{alt}$. Then it is not hard to check that $\{ \bisim_{a} \}_{a \in A \cup \{alt\}}$ is indeed an agent-alternating bisimulation family. By a simple induction, this clearly implies that $\mathcal{M}, l_1 \bisim_{alt}^{\omega} \mathcal{N}, l'_1$. But of course $\mathcal{M}, l_1 \not\bisim^2 \mathcal{N}, l'_1$ since $\mathcal{M}, l_1 \models \Diamond_a\Diamond_a p $ but $\mathcal{N}, l'_1 \not\models \Diamond_a\Diamond_a p$.

The following two models deal with the $\mathbf{KD5}$ case. 
\begin{center}
       \begin{tikzpicture}
        \node[world] (w1) [label=left:$m_1$] {};
        \node[world] (w2) [below=5mm of w1, label=left:$m_2$] {};
       	\node[world] (w3) [below=5mm of w2, label=below:$p$, fill=gray, label=left:$m_3$] {};
        
        \node (m) [left=7mm of w3] {$\mathcal{M}$};
        
        \path (w1) edge[->, dashed] node {} (w2);
        \path (w2) edge[->, dashed] node {} (w3);
        \path (w1) edge[loop right] node {} (w1);
        \path (w2) edge[loop right] node {} (w2);
        \path (w3) edge[->, loop,looseness=7,in=20,out=50] node {} (w3);
        \path (w3) edge[->, dashed, loop,looseness=7,in=-50,out=-20] node{} (w3);       

        \node[world] (w7) [right=3cm of w1, label=left:$l_1$] {};
        \node[world] (w8) [right=of w7, label=right:$r_1$] {};
       	\node[world] (w9) [below=5mm of w7, label=left:$l_2$] {};
        \node[world] (w10) [right=of w9, label=right:$r_2$] {};
	    \node[world] (w11) [below=5mm of w9, label=below:$p$, fill=gray, label=left:$l_3$] {};
        \node[world] (w12) [right=of w11, label=below:$p$, fill=gray, label=right:$r_3$] {};
        
        \node (n) [left=7mm of w11] {$\mathcal{N}$};
        
        \path (w7) edge[<-] node {} (w8);
        \path (w7) edge[loop below] node {} (w7);
        \path (w7) edge[->, dashed] node {} (w10);
        \path (w8) edge[->, dashed] node {} (w10);
        \path (w9) edge[<-] node {} (w10);
        \path (w9) edge[loop below] node {} (w9);
        \path (w9) edge[->, dashed] node {} (w12);
        \path (w10) edge[->, dashed] node {} (w12);
        \path (w11) edge[<-] node [swap] {} (w12);
        \path (w11) edge[->, dashed, bend right] node {} (w12);
        \path (w11) edge[->, loop,looseness=7,in=30,out=60] node{} (w11);
        \path (w12) edge[->, dashed, loop,looseness=7,in=30,out=60] node{} (w12);
    \end{tikzpicture}
\end{center}
This case is easier. For each $i \in \{1, 2, 3\}$, $m_i \bisim_a l_i, r_i$ and $m_i \bisim_b r_i$. Then connecting $m_1$ with $r_1$ by $\bisim_{alt}$, we have an agent-alternating bisimulation family. Hence $\mathcal{M}, m_1 \bisim_{alt} \mathcal{N}, r_1$. However, we have $\mathcal{M}, m_1 \models \Diamond_a \Diamond_a p$ and $\mathcal{N}, r_1 \not\models \Diamond_a \Diamond_a p$. 

Finally, the following two models deal with the $\mathbf{B}$ case. Again, the reflexive loops are omitted from the diagram. The agent-alternating bisimulation family and the formula to refute $\bisim^2$ we need to use are the same as we used in the $\mathbf{S4}$ case.
\begin{center}
    \begin{tikzpicture}
	    \node[world] (w1) [label=left:$l_1$] {};
        \node[world] (w2) [right=of w1, label=right:$r_1$] {};
       	\node[world] (w3) [below=5mm of w1, label=left:$l_2$] {};
        \node[world] (w4) [right=of w3, label=right:$r_2$] {};
	    \node[world] (w5) [below=5mm of w3, label=below:$p$, fill=gray, label=left:$l_3$] {};
        \node[world] (w6) [right=of w5, label=below:$p$, fill=gray, label=right:$r_3$] {};
        
        \node (m) [left=7mm of w5] {$\mathcal{M}$};
        
        \path (w1) edge[-] node {} (w2);
        \path (w3) edge[-] node {} (w4);
        \path (w5) edge[-] node {} (w6);
        \path (w1) edge[-, dashed] node {} (w3);
        \path (w3) edge[-, dashed] node {} (w5);
        
        \node[world] (w7) [right=3cm of w2, label=left:$l^\prime_1$] {};
        \node[world] (w8) [right=of w7, label=right:$r^\prime_1$] {};
       	\node[world] (w9) [below=5mm of w7, label=left:$l^\prime_2$] {};
        \node[world] (w10) [right=of w9, label=right:$r^\prime_2$] {};
	    \node[world] (w11) [below=5mm of w9, label=below:$p$, label=left:$l^\prime_3$, fill=gray] {};
        \node[world] (w12) [right=of w11, label=below:$p$, label=right:$r^\prime_3$,  fill=gray] {};
        
        \node (n) [left=7mm of w11] {$\mathcal{N}$};
        
        \path (w7) edge[-] node {} (w8);
        \path (w9) edge[-] node {} (w10);
        \path (w11) edge[-] node {} (w12);
        \path (w7) edge[-, dashed] node {} (w10);
        \path (w9) edge[-, dashed] node {} (w12);
    \end{tikzpicture}
\end{center}\vspace{-.3in}
\end{proof}

\end{document}